\providecommand{\DontPrintSemicolon}{\dontprintsemicolon}
\newtheorem{theorem}{Theorem}%[section]
\newtheorem{blankthm}[theorem]{}
\newtheorem{lemma}[theorem]{Lemma}
\newtheorem{corollary}[theorem]{Corollary}
\newtheorem{prop}[theorem]{Proposition}
\newcommand{\suppress}[1]{}
\def\rset{\mathbb R}
\def\qed{\mbox{ }~\hfill~$\Box$}
\def\ep{\varepsilon}
\newcommand{\be}{\begin{equation}}
\newcommand{\ee}{\end{equation}}
\newcommand{\bea}{\begin{eqnarray}}
\newcommand{\eea}{\end{eqnarray}}
\newcommand{\bean}{\begin{eqnarray*}}
\newcommand{\eean}{\end{eqnarray*}}
\def\y{{y}}
\def\x{{x}}
\def\z{{z}}
\newcommand{\aiout}{a_i^{\text{out}}}
\newcommand{\aiin}{a_i^{\text{in}}}
\newcommand{\Otilde}{\widetilde{O}}
\newcommand{\alphaiout}{\alpha_i^{\text{out}}}
\newcommand{\alphaiin}{\alpha_i^{\text{in}}}
\newcommand{\alphavout}{\alpha_v^{\text{out}}}
\newcommand{\alphavin}{\alpha_v^{\text{in}}}
\newcommand{\alphabar}{\bar{\alpha}}
\newcommand{\alphavmax}{\alpha_v^{\text{max}}}
\newcommand{\alphabarvout}{\alphabar_v^{\text{out}}}
\newcommand{\alphabarvin}{\alphabar_v^{\text{in}}}
\newcommand{\alphabarvmax}{\alphabar_v^{\text{max}}}
\newcommand{\rhoone}{\textstyle\sum_v \rho_v}
\newcommand{\normrowi}{\Vert a_{i\cdot}\Vert}
\newcommand{\normcoli}{\Vert a_{\cdot i}\Vert}
\begin{document}
\title{Analysis of a Classical Matrix Preconditioning Algorithm}
\author{Leonard J. Schulman}\thanks{LJS was supported in part by NSF grants 1038578 and
1319745 and by the Simons Institute for the Theory of Computing, where most of this work
was performed.  AS was supported in part by NSF grants 1016896 and 1420934.
Addresses: Caltech, Engineering and Applied Science MC305-16, Pasadena CA 91125, USA, {\tt schulman@caltech.edu}; Computer Science Division, Soda Hall, University
of California, Berkeley CA 94720--1776, USA, {\tt sinclair@cs.berkeley.edu}. }
\author{Alistair Sinclair}

%%%%%%%%%%
%%%%% abstract
%%%%%%%%%%

\begin{abstract}
We study a classical iterative algorithm for balancing matrices in the $L_\infty$ norm via a scaling transformation. This algorithm, which goes back to Osborne and Parlett \& Reinsch in the 1960s,
is implemented as a standard preconditioner in many numerical linear algebra packages.
Surprisingly, despite its widespread use over several decades, no bounds were known on
its rate of convergence.  In this paper we prove that, for any irreducible $n\times n$ 
(real or complex) input matrix~$A$, a natural variant of the algorithm converges in 
$O(n^3\log(n\rho/\varepsilon))$ elementary balancing operations, where $\rho$ measures
the initial imbalance of~$A$ and $\varepsilon$ is the target imbalance of the output matrix.  
(The imbalance of~$A$ is $\max_i |\log(\aiout/\aiin)|$, where $\aiout,\aiin$ are the maximum
entries in magnitude in the $i$th row and column respectively.)  This bound is tight
up to the $\log n$ factor.  A balancing operation
scales the $i$th row and column so that their maximum entries are equal, and requires
$O(m/n)$ arithmetic operations on average, where $m$ is the number of non-zero elements in~$A$.
Thus the running time of the iterative algorithm is $\Otilde(n^2m)$. This is the first time bound
of any kind on any variant of the Osborne-Parlett-Reinsch algorithm.
We also prove a conjecture of Chen that characterizes those matrices for which
the limit of the balancing process is independent of the order in which 
balancing operations are performed.
\end{abstract}
\newpage
\setcounter{page}{1}

\maketitle
\thispagestyle{empty}

%%%%%%%%%%
%%%%% intro
%%%%%%%%%%

\section{Introduction}\label{sec:intro}
\subsection{Background and discussion of results}\label{subsec:results}
In numerical linear algebra, it is standard practice to {\it precondition\/} an
$n\times n$ real or complex matrix~$A$ by performing a similarity transform 
$D^{-1}AD$ for some diagonal scaling matrix~$D$, prior to performing other computations
on~$A$ such as computing its eigenvalues.  The goal is that $D^{-1}AD$ should
be {\it balanced}, in the sense that the norm of the $i$th row is equal to the norm
of the $i$th column, for all~$i$.   The point here is that standard
linear algebra algorithms tend to be numerically unstable for unbalanced matrices.
Diagonal scaling achieves balance without affecting the eigenvalues of~$A$.
(Preconditioning typically also involves a separate process of row and column permutations
which we ignore here.)

This idea goes back to Osborne in 1960~\cite{Osborne60}, who suggested an iterative
algorithm for finding a~$D$ that balances~$A$ in the $L_2$ norm, and proved that it converges in the limit; he also proposed using the analogous iteration in the $L_\infty$ norm. 
Parlett and Reinsch generalized the algorithm to other norms~\cite{ParlettR69}.  
Neither of these papers gave
any bound on the convergence time of the algorithm (in any norm).
In the decades since then, this algorithm
has been implemented as standard in almost all numerical linear algebra software, 
including EISPACK, LAPACK and MATLAB. For further background
see, e.g.,~\cite{TrefethenEmbree05,Kressner05}. Surprisingly,
despite its widespread use in practice, no bounds are
known on the running time of any variant of this iterative balancing algorithm.

Our goal in this paper is to initiate a quantitative study of the Osborne-Parlett-Reinsch
algorithm. We emphasize that our motivation is to understand an existing method which
has emerged as the leading choice of practitioners, rather than to devise a new competitor
in the asymptotic regime; however, the bounds we obtain show that even asymptotically
this method is not far off the theoretically best (and much more complex) algorithms.
(See the Related Work section for a discussion.)

The iterative algorithm is very easy to describe, and involves repeated execution of a
simple local balancing operation.   
Let $\| \cdot \|$ be a norm. 
For an index $i\in [n]$, let $\normrowi$
and $\normcoli$ denote the norms of the $i$th row and $i$th column of~$A$ respectively.
To avoid technicalities, we make the standard assumption that $A$ is
irreducible (see below), which ensures these norms are always nonzero; otherwise,
earlier (and faster) stages of the preconditioning decompose~$A$ into irreducible
components.  A {\it balancing operation at}~$i$ scales each entry in column~$i$ of~$A$ by 
$r_i={\small\sqrt{\normrowi/\normcoli}}$, and each entry in row~$i$ by~$r_i^{-1}$
(thus leaving $a_{ii}$ unchanged).
Note that this operation corresponds to the diagonal
transformation $D^{-1}AD$, where $D={\rm diag}(1,\ldots,r_i,\ldots,1)$.  The iterative algorithm simply performs the following step repeatedly:\footnote{The algorithm 
in \cite{Osborne60,ParlettR69} operates on the indices in a fixed cyclic order.  For ease
of analysis we pick indices randomly.  We conjecture that this change makes little difference
to the rate of convergence.}
 \textit{Pick an index $i$ and perform a balancing operation at~$i$.}

We focus in this paper on the case of balancing in the $L_\infty$ norm, where the goal
is to find~$D$ so that in $D^{-1}AD$ the largest entry (in magnitude) in the $i$th row is equal to
the largest entry in the $i$th column, for all~$i$.  We refer to such a matrix as {\it balanced}. 
The scaling factor in the algorithm is then
just $r_i={\small\sqrt{\aiout/\aiin}}$, where $\aiout=\max_j |a_{ij}|$ and $\aiin = \max_j |a_{ji}|$ are 
the maximum elements in the $i$th row and column respectively.  This version of the algorithm
is particularly simple to implement, which makes it an attractive alternative to (say)
the $L_2$ version.  Since the algorithm depends only on the magnitudes of the $a_{ij}$, 
and multiplies them only by positive reals, we will simplify from now on
by assuming that all entries $a_{ij}$ are non-negative, keeping in mind that
the balancing operations are actually performed on the original 
matrix~$A$.\footnote{It is sometimes also assumed
that the diagonal entries of~$A$ are replaced by zeros before the balancing process, since balancing never alters them. 
This replacement may of course change the problem since (e.g.) a diagonally dominant matrix
is already balanced in~$L_\infty$. We do not assume such a replacement has been made; our analysis of the balancing algorithm applies whether or not diagonal entries are zero.}

It is instructive to view this procedure in terms of the directed weighted graph~$G_A$, 
in which there is an edge $(i,j)$ of weight $a_{ij}$ if $a_{ij}>0$ (and no edge if $a_{ij}=0$).
Irreducibility of~$A$ corresponds to $G_A$ being strongly connected. The above balancing 
operation scales all incoming (respectively, outgoing) edge weights at vertex~$i$ by 
$r_i={\small\sqrt{\aiout/\aiin}}$ (respectively, $r_i^{-1}$),
where $\aiout$, $\aiin$ are respectively the maximum outgoing
and incoming weights at~$i$.  At first sight (taking logarithms of the edge weights)
this process resembles a diffusion
on~$G_A$; however, the fact that the scaling at each step depends
on the {\it maximum\/} incoming/outgoing edge weight is a nonlinearity
that makes the process much harder to analyze.

Clearly any fixed point of the above iteration must be a balanced matrix~$A$,
i.e., $\aiout=\aiin$ for all~$i$; in other words, in the weighted graph~$G_A$,
the maximum incoming edge weight is equal to the maximum outgoing edge weight
at every vertex. Note that the product of the edge weights along any cycle in~$G_A$ is an
invariant of the algorithm. It is thus
natural to call two matrices {\it equivalent\/} if they agree on all these invariants.
The fact that the algorithm always converges asymptotically to a balanced matrix
was proved, surprisingly recently, by Chen~\cite{Chen-MS98}\footnote{To make
the current paper self-contained, we present in Appendix~\ref{app:yi-reproof}
an alternative proof of convergence that uses combinatorial rather than topological
techniques and is closer to the methodology developed in this paper.}; in particular
this implies the existence of a balanced matrix in each equivalence class. 
Chen also conjectured
that a worst-case input (in terms of rate of convergence) is one in which $G_A$ is
simply a directed cycle; in this case each vertex has only one incoming and one
outgoing edge, so the balancing operation is linear and convergence is easily seen
to occur within $\Theta(n^3)$ operations.
(The $\Theta$ here hides factors that depend on the
maximum imbalance in the input matrix and the desired bound on how close the output
matrix is to being balanced.)  In most cases the process
seems to converge much faster, but the problem of analyzing the convergence
rate in general graphs, or even in any strongly connected graph other than a directed cycle,
has remained open until now.  
For example, even the case of two directed cycles that share a common vertex
is quite non-trivial and has proved resistant to standard arguments.

The first difficulty one faces in analyzing convergence rates is that the balanced
matrix to which the algorithm converges is in general not uniquely defined, but may
depend on the sequence of indices~$i$ selected.  
\begin{figure}
$$A=
\left(
\begin{array}{cccc}
0&2&0&0\\
8&0&2&0\\
0&1&0&2\\
0&0&8&0
\end{array}
\right);\qquad
B_1=
\left(
\begin{array}{cccc}
0&4&0&0\\
4&0&2&0\\
0&1&0&4\\
0&0&4&0
\end{array}
\right);\qquad
B_2=
\left(
\begin{array}{cccc}
0&4&0&0\\
4&0&1&0\\
0&2&0&4\\
0&0&4&0
\end{array}
\right)
$$
\caption{\it Example of a non-UB matrix~$A$.  The balanced matrix~$B_1$ results
from $A$ by balancing at indices~1,4, while $B_2$ results by balancing at indices 2,4.}
\label{fig:simpleex}
\end{figure}
(See Figure~\ref{fig:simpleex} for a simple example that illustrates this phenomenon.)
Our first result proves a conjecture of Chen~\cite{Chen-MS98}, which 
characterizes all cases in which the balancing problem has a unique solution.
For a balanced matrix $B=\{b_{ij}\}$ and any real $w>0$, let $G_B^w$ 
denote the subgraph of~$G_B$ consisting of those edges of magnitude at least~$w$,
with isolated vertices removed. (A vertex with a self-loop is not considered isolated.)
\begin{theorem}\label{thm:characterizeUB}
A balanced matrix~$B$ is the unique balanced matrix in its equivalence class
if and only if  $G_B^w$ is strongly connected for all~$w$.
\end{theorem}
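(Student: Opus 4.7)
I will prove both directions separately, exploiting the standard fact that equivalent irreducible balanced matrices are related by a positive diagonal similarity $B' = D^{-1} B D$ (this follows from the well-definedness of $d_i := \prod b'_{kl}/b_{kl}$ along any path $1 \to i$, which is forced to be path-independent by equality of cycle products). Writing $x_v = \log d_v$, uniqueness of $B$ in its equivalence class is equivalent to the statement that every such $x$ making $B'$ balanced must be constant.

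\emph{Sufficiency.} Assume $G_B^w$ is strongly connected for every $w > 0$, and suppose for contradiction that $x$ is non-constant. Set $S^* = \{v : x_v = \max_u x_u\}$ and $S_* = \{v : x_v = \min_u x_u\}$; these are disjoint. Starting from the balance equation of $B'$ at $v \in S^*$, the elementary bounds $\max_j(\log b_{vj} + x_j) \le \log \alpha_v + x^*$ (from $x_j \le x_v = x^*$) and $\max_j(\log b_{jv} - x_j) \ge \log \alpha_v - x_{j^-}$ (for any heaviest in-neighbor $j^-$ of $v$) yield: \emph{(i)} every heaviest in-neighbor of $v$ in $B$ lies in $S^*$; and \emph{(ii)} some heaviest out-neighbor of $v$ lies in $S^*$. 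Next, let $\alpha^\dagger = \max_{v \in S^*} \alpha_v$ and $T^* = S^* \cap \{v : \alpha_v = \alpha^\dagger\}$, which is nonempty. A predecessor $j$ of $v \in T^*$ in $G_B^{\alpha^\dagger}$ satisfies $b_{jv} \ge \alpha^\dagger = \alpha_v$, forcing $b_{jv} = \alpha_v$; so $j$ is a heaviest in-neighbor of $v$, hence in $S^*$ by (i), and its balance satisfies $\alpha_j \ge \alpha^\dagger$ (from the out-edge of weight $\alpha^\dagger$) and $\alpha_j \le \alpha^\dagger$ (from $j \in S^*$), so $j \in T^*$. Thus $T^*$ is closed under predecessors in the strongly connected graph $G_B^{\alpha^\dagger}$, forcing $T^* = V_{\ge \alpha^\dagger}$; this in turn forces $\alpha^\dagger = \beta := \max_v \alpha_v$ and $V_\beta \subseteq S^*$. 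The symmetric argument applied at $S_*$ gives $V_\beta \subseteq S_*$, contradicting $S^* \cap S_* = \emptyset$ (since $V_\beta \ne \emptyset$).

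\emph{Necessity.} Suppose $G_B^{w^*}$ is not SCC for some $w^* > 0$; I would exhibit $D \ne c I$ making $D^{-1} B D$ balanced. Linearising the balance equations near $x = 0$ shows that a direction $y = dx/dt$ preserves balance to first order iff $\max_{j \in J_v^+} y_j + \min_{j \in J_v^-} y_j = 2 y_v$ for every $v$, where $J_v^\pm$ are the sets of heaviest out- and in-neighbors of $v$; any non-constant solution $y$ extends (by continuity of the argmax for small $t$) to a genuine curve $D(t) = \operatorname{diag}(e^{t y_v})$ of equivalent balanced matrices. I would construct such a $y$ by picking an SCC $C$ of $G_B^{w^*}$ (after removing isolated vertices), taking $y$ to equal one constant on the closure of $C$ under the forced heaviest-neighbor conditions (add every heaviest in-neighbor of a current $U$-vertex; add every vertex whose heaviest out-neighbors intersect $U$) and a different constant on the closure of the complementary region, and assigning intermediate values to the ``straddling'' vertices (those whose heaviest in- and out-neighbors lie on both sides of the cut) by the interpolation the equation demands.

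The \emph{main obstacle} is the necessity direction. A naive binary assignment $y \in \{0,1\}$ does not always suffice: small examples show that straddling vertices may be forced to take values strictly between $0$ and $1$, and the forced closure could in principle propagate indefinitely and collapse $y$ to a constant. Proving that the construction always yields a globally consistent, genuinely non-constant $y$ when some $G_B^{w^*}$ fails to be SCC requires an induction over the weight levels that essentially mirrors the sufficiency argument in reverse, using the non-SCC structure at $w^*$ as the base obstruction preventing the two closures from merging. This is the heart of the theorem.
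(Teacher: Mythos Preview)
Your sufficiency direction ($\Leftarrow$) is correct, though by a different route than the paper.  The paper argues the contrapositive: given distinct balanced $\alpha\sim\beta$, it picks the largest $w$ with $G_\alpha^w\ne G_\beta^w$ and uses a cycle-weight comparison to show that neither level-graph can be strongly connected at that $w$.  You instead parametrise the equivalence class as $D^{-1}BD$ and argue directly that a non-constant $x=\log D$ forces the top level set $V_\beta$ into both the argmax set $S^*$ and the argmin set $S_*$ of~$x$, a contradiction.  Both arguments are short; yours makes the role of the strong-connectivity hypothesis more explicit, while the paper's avoids setting up the diagonal-similarity parametrisation.

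Your necessity direction ($\Rightarrow$) is incomplete, as you yourself acknowledge, and the obstacle you identify is genuine: the piecewise-linear system $\max_{j\in J_v^+} y_j + \min_{j\in J_v^-} y_j = 2y_v$ is awkward to solve for a non-constant $y$, and the closure/interpolation scheme you sketch does not obviously avoid collapsing to a constant.  The paper bypasses this difficulty entirely, with an argument that does not attempt to exhibit the second balanced matrix explicitly.  Take the \emph{largest} $w$ at which $G_\beta^w$ fails to be strongly connected.  Since $\beta$ is balanced, $G_\beta^w$ has no sources or sinks, so it contains a non-trivial source SCC~$S$ and a non-trivial sink SCC~$T$; by maximality of~$w$, all cross-SCC edges in $G_\beta^w$ have weight exactly~$w$.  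Now perform the shift $\beta\mapsto\beta+\varepsilon T$ for $\varepsilon>0$ smaller than the minimum gap between distinct values of~$\beta$, and run balancing operations on the result to convergence.  The key observation is that no balancing operation is ever performed at a vertex inside a non-trivial SCC of $G_\beta^w$: such a vertex could become unbalanced only if some edge crossing into or out of its SCC were raised above~$w$, but in $\beta+\varepsilon T$ no cross-SCC edge exceeds~$w$, and balancing at the remaining vertices cannot create an edge heavier than~$w$.  Consequently the length of a fixed simple path from some $s\in S$ to some $t\in T$, which the shift decreased by exactly~$\varepsilon$, is preserved throughout the balancing process (path length changes only when an endpoint is balanced), and this distinguishes the limiting balanced matrix from~$\beta$.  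The device you are missing, then, is to perturb first and let the iterative algorithm (whose convergence is already established) do the rebalancing, rather than to solve for an infinitesimal balanced direction by hand.
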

Note that this result gives an implicit criterion for whether a given input 
matrix~$A$ can be uniquely balanced: namely, that it is equivalent to a balanced
matrix~$B$ with the above property.  We say that such matrices~$A$ satisfy
the {\it Unique Balance (UB)\/} condition.

In an earlier version of the present paper~\cite{SSSTOC}, we focused on the UB case
and showed that a certain variant of the Osborne-Parlett-Reinsch algorithm converges
in $O(n^3\log n)$ balancing operations on any UB input matrix~$A$.  In this paper
we extend that analysis to all input matrices, for a slightly different variant of the
algorithm.  (The term ``variant" here refers only to the order in which balancing operations
are performed.)    

To describe this variant, we need the notion of ``raising" and ``lowering" operations,
which are one-sided versions of the standard balancing operation.
A {\it raising operation\/} at~$i$ performs a standard balancing
operation at~$i$ if $\aiout>\aiin$ and does nothing otherwise; similarly, a
{\it lowering operation\/} at~$i$ performs a standard balancing operation at~$i$
if $\aiout<\aiin$ and does nothing otherwise. Our variant of the algorithm will
perform a sequence of random raising operations followed by a sequence of random
lowering operations; equivalently, it can be viewed as the standard algorithm (with
a random index order) in which some of the balancing operations are censored
according to a very simple condition.

To specify the algorithm precisely, we also need a measure of how far a matrix is from
balanced.  We thus define the \textit{imbalance} of~$A$ as
$\max_i |\log(\aiout/\aiin)|$, and say that $A$ is {\it $\varepsilon$-balanced\/} if its imbalance is
at most~$\varepsilon$.  The target balance parameter $\varepsilon>0$ is provided to
the algorithm as an additional input.  We also need one-sided versions of these
definitions: namely, $A$ is {\it $\varepsilon$-raising-balanced\/} if 
$\max_i \log(\aiout/\aiin)\le\varepsilon$ (or \textit{raising-balanced} if $\ep=0$), and {\it $\varepsilon$-lowering-balanced\/} if 
$\max_i \log(\aiin/\aiout)\le\varepsilon$ (or \textit{lowering-balanced} if $\ep=0$).  Plainly $A$ is $\varepsilon$-balanced iff
it is both $\varepsilon$-raising-balanced and $\varepsilon$-lowering-balanced.

\begin{algorithm}
\DontPrintSemicolon
\KwIn{An $n\times n$ matrix $A$ with (unknown) imbalance $\rho$;
and a target $\ep>0$. }
\KwOut{A matrix equivalent to $A$ that is $\varepsilon$-balanced with high probability.}
Compute $\rho$ in time $O(n^2)$ \;
Repeat $T=O(n^3\log(\rho/\varepsilon))$ times\hskip1.0in {\it //Raising Phase//}\;
\hskip0.25in Pick an index $i$ u.a.r.\ and apply a raising operation at~$i$\;
Repeat $T=O(n^3\log(\rho/\varepsilon))$ times\hskip1.0in {\it //Lowering Phase//}\;
\hskip0.25in Pick an index $i$ u.a.r.\ and apply a lowering operation at~$i$\;
Output the resulting matrix\;
\caption{2PhaseBalance$(A,\varepsilon)$\label{alg3}}
\end{algorithm}

We note that even the asymptotic convergence of this variant of the algorithm is
not immediately clear, and does not follow from Chen's proof~\cite{Chen-MS98} or 
from our proof in Appendix~\ref{app:yi-reproof}. The reason is that those proofs
assume that the sequence of balancing operations is {\it fair}, in the sense that
each index~$i$ appears arbitrarily often.  In the two-phase version, fairness
cannot be guaranteed since some balancing operations are censored. Instead, convergence
of the two-phase algorithm is an immediate consequence of the following theorem.
\begin{theorem}\label{thm:newconv}
For any input matrix~$A$, any fair sequence of raising (resp., lowering)
operations converges to a unique raising-balanced (resp., lowering-balanced) matrix,
independent of the order of the operations.
\end{theorem}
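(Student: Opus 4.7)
The plan is to reformulate the dynamics in terms of a log-scaling vector $x\in\rset^n$: write the current matrix as $D^{-1}AD$ for $D=\mathrm{diag}(e^{x_1},\ldots,e^{x_n})$, and set $\alphaiout(x):=\max_j(\log a_{ij}+x_j)$ and $\alphaiin(x):=\max_j(\log a_{ji}-x_j)$, so that the current row/column log-maxima at index~$i$ are $\alphaiout(x)-x_i$ and $\alphaiin(x)+x_i$. The raising-balance condition at~$i$ is then $2x_i\ge\alphaiout(x)-\alphaiin(x)$, and a raising operation at~$i$ replaces $x_i$ by $\max\bigl(x_i,\,(\alphaiout(x)-\alphaiin(x))/2\bigr)$. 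In particular, the iterates $x^{(t)}$ are coordinate-wise monotone non-decreasing in~$t$, and a raising operation at~$i$ is the identity exactly when the inequality already holds there.

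The key structural observation is that $R:=\{x\in\rset^n:x\text{ is raising-balanced}\}$ is closed and closed under componentwise minimum. Indeed, given $x,x'\in R$, let $x''_i=\min(x_i,x'_i)$ and pick $y\in\{x,x'\}$ with $y_i=x''_i$; since $\alphaiout$ is coordinate-wise non-decreasing and $\alphaiin$ coordinate-wise non-increasing in its argument, we have $\alphaiout(x'')-\alphaiin(x'')\le\alphaiout(y)-\alphaiin(y)\le 2y_i=2x''_i$, so $x''\in R$. Chen's convergence theorem (re-proved in Appendix~\ref{app:yi-reproof}) supplies a fully balanced $x$ in the equivalence class of~$A$, and the translation invariance of~$R$ then makes $S:=\{x\in R:x\ge x^{(0)}\}$ non-empty. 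Since $S$ is closed and closed under componentwise minimum, a standard lattice argument (for each~$i$ choose a sequence in~$S$ whose $i$th coordinate tends to $\inf_{x\in S}x_i$, iteratively take componentwise minima, then pass to the limit) produces a unique componentwise-minimum element $x^*\in S$.

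It then remains to show that any fair sequence of raising operations starting at $x^{(0)}$ converges to~$x^*$. A short induction gives $x^{(t)}\le x^*$ throughout: if $x^{(t)}\le x^*$, then monotonicity of $\alphaiout,\alphaiin$ combined with the raising-balance of~$x^*$ yields $(\alphaiout(x^{(t)})-\alphaiin(x^{(t)}))/2\le(\alphaiout(x^*)-\alphaiin(x^*))/2\le x^*_i$, so an update at~$i$ keeps $x^{(t+1)}_i\le x^*_i$. The iterates, being coordinate-wise non-decreasing and bounded above by~$x^*$, converge to some $x^\infty\le x^*$. Fairness and continuity of $\alphaiout,\alphaiin$ then force $x^\infty\in R$: for each~$i$, passing to the limit along the infinite subsequence of steps at which~$i$ is selected gives $x^\infty_i\ge(\alphaiout(x^\infty)-\alphaiin(x^\infty))/2$. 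Since also $x^\infty\ge x^{(0)}$, we get $x^\infty\in S$, and minimality of~$x^*$ forces $x^\infty=x^*$. The lowering case is dual, with componentwise minimum replaced by maximum, non-decrease by non-increase, and $S$ replaced by $\{x\in L:x\le x^{(0)}\}$ where $L$ is the set of lowering-balanced configurations. The main obstacle is the lattice step producing~$x^*$: it simultaneously delivers the a priori upper bound that caps the dynamics and pins down the limit as a structural invariant of $(A,x^{(0)})$, independent of the (random) order of operations; once this is in hand, the rest is monotone convergence together with a routine continuity argument at the limit.
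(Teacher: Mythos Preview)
Your proof is correct and takes a genuinely different route from the paper's.

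Both arguments rest on the same monotonicity observation: if $x\le x'$ coordinate-wise, then a raising update at any index preserves this order (the paper isolates this as a separate lemma on raising vectors; you fold it into the induction $x^{(t)}\le x^*$). The two proofs diverge in how they obtain (a)~an a~priori upper bound on the iterates and (b)~uniqueness of the limit.

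For boundedness, the paper constructs an ad~hoc potential $\Phi(t)=\sum_{(u,v)}N_{uv}(t)!\,\alpha_{uv}(t)$ (with $N_{uv}$ the rank of edge $(u,v)$ in the weight order) and shows it drops by at least $\rho_v^R/2$ at each step; combined with a lower bound on edge weights, this caps the cumulative raising vector. You instead invoke Chen's theorem to produce one balanced configuration, translate it upward so it dominates $x^{(0)}$, and thereby exhibit a concrete element of~$S$; boundedness of the iterates then falls out of the induction $x^{(t)}\le x^*$. Your route is cleaner but imports the full conclusion of the appendix, whereas the paper uses only the edge-weight lower bound from it.

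For uniqueness, the paper runs a symmetric ``sandwich'' argument: apply a second fair sequence both from $\alpha$ and from the already-reached limit~$\alpha^*$, use monotonicity to get $s^*\le r^*$, then swap roles. You instead identify the common limit \emph{structurally} as the componentwise minimum of the set $S=\{x\in R:x\ge x^{(0)}\}$, exploiting that $R$ is closed under coordinate-wise $\min$ (a nice lattice fact the paper does not state). This buys you a clean characterization of~$\alpha^R$ as the least raising-balanced configuration dominating the initial one, independent of any dynamics---a description the paper arrives at only implicitly. The paper's argument, on the other hand, is entirely self-contained within the raising dynamics and does not appeal to the existence of a fully balanced matrix.
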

Correctness of the two-phase algorithm follows because it is easy to
see that the lowering phase cannot increase the raising imbalance of~$A$ 
(see Proposition~\ref{prop:twophasebasic}).  The value~$T$ is chosen to
guarantee that $A$ is $\varepsilon$-raising-balanced (resp., $\varepsilon$-lowering-balanced)
at the end of a phase w.h.p.; hence, at the conclusion of both phases $A$ will indeed be 
$\varepsilon$-balanced w.h.p.

Theorem~\ref{thm:newconv} is actually a somewhat surprising result: even when $A$ is not UB
(so that the balanced matrix to which the original version of the algorithm converges
depends on the sequence of operations performed), if we perform only raising
(or only lowering) operations we are guaranteed a unique limit.  It is this fact that
will allow us to obtain an analysis of the rate of convergence.

We are now able to state the main result of the paper, which 
bounds the rate of convergence of the above algorithm.
\begin{theorem}\label{thm:main}
On input $(A,\varepsilon)$, where $A$ is an arbitrary irreducible non-negative matrix
with imbalance~$\rho$, the above two-phase $L_\infty$ balancing algorithm
performs $O(n^3\log(n\rho/\varepsilon))$ balancing operations and outputs
a matrix equivalent to~$A$ that is $\varepsilon$-balanced w.h.p.\footnote{We
use the phrase {``}with high probability (w.h.p.)'' to mean with probability tending to~1 as
$n\to\infty$.}
\end{theorem}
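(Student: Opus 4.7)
The plan is to reduce the analysis to the Raising Phase in isolation. By Proposition~\ref{prop:twophasebasic}, the Lowering Phase never increases the raising imbalance, and by Theorem~\ref{thm:newconv} each phase has a unique one-sided target matrix; so it suffices to prove the one-sided statement that a fair random sequence of $T=O(n^3\log(n\rho/\varepsilon))$ raising operations starting from a matrix of imbalance~$\rho$ produces an $\varepsilon$-raising-balanced matrix with high probability (the Lowering Phase is entirely symmetric). I would work in log-coordinates: let $z_i^{(t)}=\log D_{ii}^{(t)}$ be the accumulated scaling at vertex~$i$ after $t$ raising steps, with $z^{(0)}=0$. A raising operation at~$i$ whose current one-sided imbalance is $\delta_i^{(t)}:=\max(0,\alphaiout-\alphaiin)$ adds exactly $\delta_i^{(t)}/2$ to $z_i$; by Theorem~\ref{thm:newconv} the trajectory converges to a unique limit $z^\ast$, and coordinate-monotonicity of raising guarantees $0\le z_i^{(t)}\le z_i^\ast$ throughout.

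The natural potential is
\[
\Phi^{(t)} \;=\; \sum_{i=1}^n\bigl(z_i^\ast - z_i^{(t)}\bigr) \;\ge\; 0,
\]
whose expected decrease under a uniformly random raising step is $(2n)^{-1}\sum_i\delta_i^{(t)}$. Because every summand is nonnegative, $\Phi\le\varepsilon$ forces $\max_i(z_i^\ast-z_i)\le\varepsilon$, and any such $L_\infty$-closeness of $z$ to $z^\ast$ translates into raising imbalance $O(\varepsilon)$ at the current matrix (each edge log-weight differs from its $z^\ast$-value by at most $2\varepsilon$), so it suffices to drive $\Phi$ down to $O(\varepsilon)$. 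The heart of the argument is a drift lemma of the form $\sum_i\delta_i(z)\ge \Phi(z)/n^2$. I would attempt to prove this by normalizing so that $\min_i(z_i^\ast-z_i)=0$, setting $\Delta=\max_i(z_i^\ast-z_i)$, and locating a single vertex $i^\star$ with $\delta_{i^\star}\ge c\Delta/n$ for an absolute constant $c>0$. Strong connectivity of $G_A$ furnishes a directed path of length at most $n$ from a max-gap vertex to a zero-gap vertex, and telescoping the gap-differences along this path isolates an edge $(i,j)$ with $(z_i^\ast-z_i)-(z_j^\ast-z_j)\ge\Delta/n$, whose weight at the current $z$ exceeds its $z^\ast$-value by exactly this amount; combining this with the raising-balanced inequality $\alphaiout\le\alphaiin$ at $z^\ast$ for the endpoints should pin a $\Delta/n$ one-sided imbalance on some vertex along the path.

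Given the drift lemma, the expected one-step decrease is $\Omega(\Phi/n^3)$, so $\mathbb{E}[\Phi^{(T)}]\le(1-c/n^3)^T\Phi^{(0)}$. A crude bound $\Phi^{(0)}\le n\max_i z_i^\ast=O(n^2\rho)$, obtained by a similar longest-path argument starting from the initial imbalance hypothesis, together with Markov's inequality applied at $T=\Theta(n^3\log(n\rho/\varepsilon))$, drives $\Phi^{(T)}\le\varepsilon$ w.h.p.\ and hence establishes $\varepsilon$-raising-balance. The main obstacle is the drift lemma: the maxima $\alphaiout$ and $\alphaiin$ at the current $z$ are generally achieved at \emph{different} edges than at $z^\ast$, so converting a controlled change in the weight of one particular edge into a controlled change in $\alphaiout-\alphaiin$ at a single vertex requires case analysis (for instance, choosing $i^\star$ so that its $z^\ast$-argmax out-neighbor has strictly smaller gap, and otherwise iterating along the telescoping path). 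This is precisely where the factor $1/n$, and ultimately the $n^3$ in the theorem, enters.
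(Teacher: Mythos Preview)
Your high-level framework is exactly the one the paper uses: reduce to the raising phase, define the potential $\Phi=\sum_v(z_v^\ast-z_v^{(t)})=-\sum_v y_v$, observe that a random raising step decreases $\Phi$ in expectation by $\frac{1}{2n}\sum_v\rho_v^R$, and combine a drift inequality $\sum_v\rho_v^R\ge c\,\Phi/n^2$ with Markov to finish. The termination criterion ($\Phi\le\varepsilon/2\Rightarrow\varepsilon$-raising-balanced) and the initial bound on $\Phi$ are also handled the same way (and in fact both follow from the drift inequality itself, so you need not invoke a separate ``longest-path argument'' for $\Phi^{(0)}$).

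The genuine gap is the drift inequality itself. Your telescoping-along-a-path idea locates an edge $(i,j)$ with $\alpha_{ij}-\alpha^R_{ij}=g_i-g_j\ge\Delta/n$, but this alone does not force $\rho_i^R\ge c\Delta/n$: one has $\rho_i^R\ge\alpha_{ij}-\alpha_i^{\text{in}}$, and $\alpha_i^{\text{in}}=\max_{u\to i}(\alpha^R_{ui}+g_u-g_i)$ can be as large as $x_i+(\max_{u\to i}g_u-g_i)$, so the bound you obtain is $\rho_i^R\ge(\Delta/n)-(x_i-\alpha^R_{ij})-(\max_{u\to i}g_u-g_i)$. Both subtracted terms can be arbitrarily large: the first because the path edge $(i,j)$ need not be a heavy edge of $\alpha^R$, the second because $i$ may have an in-neighbor off the path with much larger gap. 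The ``case analysis'' you allude to would have to simultaneously control the levels $x_v$ and the gaps of \emph{all} in-neighbors as you iterate, and doing this is precisely where the paper departs from the one-dimensional picture. It introduces a second coordinate, the \emph{level} $x_v=\max_{u\to v}\alpha^R_{uv}$, and an auxiliary quantity, the \emph{momentum} $m_v=\max_{u\to v}\alpha_{uv}-x_v$, and proves by induction on heights that $m_v\le\rho(S_v)+z_v$ and then $y_v-y_{\min}\le|S_v|\cdot\rho(S_v)$, where $S_v=\{u:y_u<y_v\}$. The inductive step requires a non-obvious structural fact (their Lemma~\ref{lem:technical}): among the vertices of currently-maximal level below height $y_v$, one of them has an $\alpha^R$-heavy out-edge that escapes to height $\ge y_v$. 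This is what substitutes for your path and guarantees the edge used is heavy in $\alpha^R$, killing the $x_i-\alpha^R_{ij}$ loss. The resulting inequality is $h\le(n-1)\sum_v\rho_v^R$ (a bound on the \emph{sum}, not on a single $\rho_{i^\star}$), which is exactly the drift you need.

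In short: your outer shell is right and matches the paper, but the drift lemma is the entire content of the theorem, and the path-telescoping strategy does not survive the two obstructions above without importing the level/momentum machinery.
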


This is the first time bound (except on the cycle) for any variant of the
Osborne-Parlett-Reinsch algorithm (under any norm). Moreover, 
the bound on the convergence rate is actually tight up to a factor $O(\log n)$,
in view of the $\Omega(n^3)$ lower bound we mentioned earlier for the cycle.
Indeed, Theorem~\ref{thm:main} implies that the cycle is a worst case input for the algorithm
(up to the $\log n$ factor), as conjectured by Chen~\cite{Chen-MS98}.

{\bf Remark.}
Throughout the paper, we quantify the rate of convergence of the algorithm
in terms of the number of balancing operations performed.  Since each balancing operation
(raising or lowering)
involves finding a maximum element in one row and column and then scaling the row and
column, it can be performed on average in $O(m/n)$ arithmetic operations, 
where $m$ is the number of edges in~$G_A$ (i.e., the
number of non-zero entries in~$A$).  Thus, in terms of arithmetic operations,
the (average) running time of the algorithm is $O(n^2m \log n)$.

We close this section with a brief outline of our approach to analyzing the algorithm.
As we have observed above, it suffices to analyze only the raising phase; by
symmetry, an identical analysis applies to the lowering phase\footnote{It is an interesting
open question whether the introduction of phases in the Osborne-Parlett-Reinsch algorithm
leads to more rapid convergence in practice.}.
The first key observation is that the uniqueness of the limit in the raising phase allows us
to assign a well-defined \textit{height} to every vertex in~$G_A$; the height of a vertex
measures the {``}amount of raising" that needs to be performed at that vertex in order to
reach the unique raising-balanced configuration.
Each raising operation can then be viewed as smoothing these heights locally. 
In a diffusion process this naturally leads to 
an associated Laplacian potential
function consisting of the sum of squares of local height differences, whose convergence
is captured by the eigenvalues of the heat kernel operator. Attempts to conduct a similar analysis here fail. 
One serious problem is that a balancing operation at vertex~$i$
depends only on the incoming and outgoing edges of {\it maximum\/} weight, but has side effects
on all other edges incident at~$i$.  Another problem is that there are, essentially, 
``levels'' within the graph: roughly speaking, vertices that lie in the ``lower'' levels of 
the graph cannot reliably converge toward (raising) balance until the vertices in the ``higher"
levels have converged.   While phenomena similar to this can arise
in diffusion processes, in our nonlinear setting standard tools such as eigenvalues are
lacking to capture them.

The chief technical challenge in our analysis is to relate the local height changes
achieved by raising operations to an improvement in a global potential function
(the sum of the heights), thus ensuring significant progress (in expectation) over time.
The details of our analysis ensure an expected improvement that is a factor
$\Omega(n^3)$ of the current value of the potential function in each step, thus 
leading to global convergence in $\tilde{O}(n^3)$ raising operations.  

To establish the above local-global connection, we need a
two-dimensional representation of the current state of the algorithm, which 
records not only the height of a vertex but also its {\it level}; unlike the heights,
which change over time, the level of a vertex is fixed.  
This two-dimensional representation leads naturally
to the notion of the {\it momentum\/} of a vertex, which measures how far
a vertex currently is from its true level.  Momentum is a key ingredient
in relating local improvements to the global potential function.

\subsection{Related work}\label{subsec:related}
As mentioned above, the idea of iterative diagonal balancing was introduced by
Osborne~\cite{Osborne60} in 1960, who was motivated by the observation that
minimizing the Frobenius norm of~$A$ is equivalent to balancing~$A$ in the $L_2$ norm.
Osborne formulated an $L_2$ version of the above iterative algorithm and proved that
it converges in the limit (but with no rate bound); he also proposed the $L_\infty$ algorithm discussed in this paper but
left the question of convergence open.  Convergence was first 
proved by Chen~\cite{Chen-MS98} almost 40 years later.  
Parlett and Reinsch~\cite{ParlettR69} generalized Osborne's 
algorithm to other norms (without proving convergence) and discussed a 
number of practical implementation issues for preconditioning. For the $L_1$ version of the Osborne-Parlett-Reinsch algorithm, convergence was proved by Grad~\cite{Grad71}, uniqueness of the
balanced matrix by Hartfiel~\cite{Hartfiel71}, and a characterization of it by Eaves 
et al.~\cite{EavesHRS85}, but again no bounds on the running time of $L_1$ balancing were given.
 
Despite the scarcity of theoretical support, these algorithms are implemented
as standard in many numerical linear algebra packages and experience shows
them to be both useful and fast.

A sequence of two papers offers an alternative, and substantially more complex, 
non-iterative algorithm for matrix balancing in the $L_\infty$  norm.
Schneider and Schneider~\cite{SchneiderS91} gave an $O(n^4)$-time algorithm based on finding maximum
mean-weight cycles in a graph; the running time was improved to $O(nm + n^2\log n)$
using Fibonnaci heaps and other techniques by Young, Tarjan and 
Orlin~\cite{YoungTarjanOrlin91}. This is asymptotically faster than the $\Otilde(n^2m)$
worst case running time we establish for the iterative algorithm;
and for some graphs (e.g., the cycle) it is faster than the actual running time of the iterative algorithm.

However the iterative method has been favored in practice. This may be justified by the empirical distribution of inputs.
Moreover, it is definitely driven by the fact that the iterative method offers steady partial progress, and so can deliver, without being run to completion, a matrix that is sufficiently balanced for the subsequent linear algebra computation. In practice indeed the method is usually run for far fewer iterations than are needed in the worst case. 
Our purpose in this paper is to provide the first theoretical understanding of the widely used iterative methods, rather than to derive new
theoretical bounds for the underlying problem. 

In other work on matrix balancing, Kalantari {\it et al.}~\cite{Kalantari97} considered balancing in the $L_1$ norm, and provided the first polynomial time
algorithm by reduction to convex programming: $\Otilde(n^4)$ via the ellipsoid method.

Diagonal scaling has also been used to minimize matrix norms without regard to
balancing.  For example, Str\"om~\cite{Strom72} considers the problem of finding
diagonal scaling matrices to minimize the max and Frobenius norms of the matrix.
In particular, for the max norm he proves that (when $A$ is irreducible)
the optimal diagonal scaling matrix is obtained from the principal eigenvector.
Chen and Demmel~\cite{ChenD00} show that a suitable notion of {\it weighted\/}
balancing can be used to minimize the 2-norm, and
discuss Krylov-based algorithms that work efficiently on sparse matrices.
Boyd {\it et al.}~\cite{Boyd94} formulate the problem of minimizing the Frobenius
norm as a generalized eigenvalue problem.

A different notion of matrix balancing is often known as {\it Sinkhorn balancing\/} after
Sinkhorn~\cite{Sinkhorn64}, who proposed a natural iterative algorithm analogous
to that of Osborne-Parlett-Reinsch.  The goal here is to find a scaling matrix~$D$
such that $D^{-1}AD$ has prescribed row and column sums.  A polynomial
time algorithm for this problem was given by Kalantari and Khachian~\cite{KalKhach96}, 
while the convergence rate of Sinkhorn's iterative algorithm was studied by several
authors \cite{FL89,KalKhach93,LSW00,KLRS08}.  In particular Linial, Samorodnitsky
and Wigderson~\cite{LSW00} gave strongly polynomial bounds (after a non-Sinkhorn
preprocessing step) and derived a surprising approximation scheme for the permanent
of a non-negative matrix.

%%%%%%%%%%
%%%%% prelim
%%%%%%%%%%
%\input{2prelim}
\section{Preliminaries} \label{sec:prelim}
In this section we introduce some terminology and notation that we will use
throughout the paper.  We begin with an equivalent reformulation of the 
Osborne-Parlett-Reinsch balancing algorithm from the Introduction.

Let $A$ be an irreducible $n\times n$ matrix.
As described in the Introduction, the algorithm
operates by iteratively picking an index~$i$ and geometrically averaging the maximum
entries (in magnitude) $\aiout$ and $\aiin$.  It is convenient to switch to arithmetic averages
by setting $\alpha_{ij}=\log |a_{ij}|$ (so that 0 entries of~$A$ become $-\infty$).  
Letting $\alphaiout=\max_j \alpha_{ij}$ and $\alphaiin = \max_j \alpha_{ji}$,
a {\it balancing operation at~$i$} then consists of adding $(\alphaiout-\alphaiin)/2$
to the $i$th column and subtracting the same quantity from the $i$th row.
(In the actual implementation of the algorithm, this corresponds to the multiplicative
diagonal transformation 
$D={\rm diag}(1,\ldots,r_i,\ldots,1)$ 
applied to the original matrix~$A$.  However, for the rest of the paper we will assume
that the algorithm operates additively on the logarithms of the matrix entries and ignore
the details of this trivial translation.) 

We re-use the notation $G_\alpha$ from the Introduction
to denote the digraph with edges $(i,j)$ such that $\alpha_{ij}> -\infty$.  
(The passage from roman to greek letters will always resolve this abuse of notation.)
Irreducibility of~$A$ again corresponds to $G_\alpha$ being
strongly connected.  We shall regard $\alpha$ as a real-valued function defined
only on the edges of~$G_\alpha$, and call it a \textit{graph function}.  Note that
if balancing operations change the graph function $\alpha$ to $\alpha'$ we
always have $G_\alpha=G_{\alpha'}$.  For the remainder of the paper, 
we use letters $u,v$ etc.\ rather than $i,j$ to denote vertices as our focus 
will be on the graph $G_\alpha$ rather than on the matrix~$A$.

We say that a graph function~$\alpha$ is {\it balanced\/} if $\alphavout=\alphavin$ for all~$v$.
For each~$v$, define $\rho_v = |\alphavout-\alphavin|$.  
The {\it imbalance\/} of~$\alpha$
is $\rho:=\max_v\rho_v$, and $\alpha$ is {\it $\varepsilon$-balanced\/} if $\rho\le\varepsilon$.
(These definitions are equivalent to those in the Introduction in terms of the matrix
entries~$a_{ij}$.)  A balancing operation is available at~$v$ iff $\rho_v>0$.  

As indicated
in the Introduction, we analyze a slightly modified version of the basic Osborne-Parlett-Reinsch
algorithm, which has two phases and performs only certain balancing operations (raising or lowering) during a phase.  
Let $\rho_v^R=\max\{0,\alphavout-\alphavin\}$ and $\rho_v^L=\max\{0,\alphavin-\alphavout\}$
be the {\it raising\/} and {\it lowering imbalances\/}, respectively, at~$v$.
Note that $\rho_v = \rho_v^R+\rho_v^L = \max\{\rho_v^R,\rho_v^L\}$, since one of
$\rho^R_v$, $\rho^L_v$ must be zero.
A {\it raising operation\/} at~$v$ performs a standard balancing operation at~$v$ if
$\rho^R_v>0$ and does nothing otherwise; similarly, a lowering operation at~$v$
performs a standard balancing operation at~$v$ if $\rho^L_v>0$ and does nothing otherwise.
Such an operation is said to {\it raise\/} (resp., {\it lower\/})~$v$ by an amount $\rho_v^R/2$
(resp., $\rho_v^L/2$); note that a raising operation at~$v$ reduces its raising imbalance
to zero unless the maximum outgoing edge from~$v$ is a self-loop, in which case the
raising imbalance is reduced by a factor of~2.
The {\it raising imbalance\/} of~$\alpha$ is $\rho^R:=\max_v\rho^R_v$, and the 
{\it lowering imbalance\/} of~$\alpha$ is $\rho^L:=\max_v\rho^L_v$.
We say that $\alpha$ is {\it $\varepsilon$-raising-balanced\/} if $\rho^R\le\varepsilon$, and
{\it $\varepsilon$-lowering-balanced\/} if $\rho^L\le\varepsilon$.  Clearly $\alpha$ is
$\varepsilon$-balanced iff it is both $\varepsilon$-raising-balanced and
$\varepsilon$-lowering-balanced.

For any real~$w$, denote by $G_\alpha^w$ the subgraph of~$G_\alpha$
consisting of edges $(u,v)$ such that $\alpha_{uv}\ge w$, with isolated vertices
removed. (A vertex with a self-loop is not considered isolated.)

Let $\alpha,\gamma$ be two graph functions on the same graph~$G$.  We say that
$\alpha,\gamma$ are {\it equivalent}, $\alpha\sim\gamma$, if their sums around all
cycles of~$G$ are equal, i.e., if  
$\sum_{\ell=1}^k (\alpha_{v_\ell,v_{\ell+1}}-\gamma_{v_\ell,v_{\ell+1}})=0$ 
for all cycles $(v_1,\ldots,v_k)$, where $v_{k+1}=v_1$.

It is convenient for our analysis to introduce a generalization of the standard local balancing
operation at a vertex.  Let $\alpha$ be a graph function.  For any $x \in \rset$ and $S \subseteq [n]$, 
denote by $\alpha+xS$ (``$\alpha$ shifted by $x$ at the set $S$'') the graph function
given by $(\alpha+xS)_{uv}=\alpha_{uv}+x \cdot (I_S(u)-I_S(v))$, where
$I_S$ is the indicator function for membership in~$S$.
Thus the balancing operation at~$v$ is equivalent to shifting by
$(\alphavin-\alphavout)/2$ at the set~$\{v\}$.
It is readily seen that two graph functions are equivalent iff there is a sequence of shifts
converting one to the other.  Note that, like balancing operations,
shifts of~$\alpha$ do not change~$G_\alpha$.

%%%%%%%%%%
%%%%% ub
%%%%%%%%%%
\section{The Unique Balance property} \label{sec:ub}

As shown by Chen (and reproved with a different argument in the present
paper---see Appendix~\ref{app:yi-reproof}),
balancing operations do not cycle; 
Chen argued this topologically while we show that there is an order on graph
functions such that nontrivial balancing operations are strictly lowering in the order. 
However, neither the topological nor the order-theoretic argument yields
useful quantitative information on the rate of convergence.
One reason for this, but not the only reason, is that the limit point of the process is not in general
uniquely determined by the input matrix~$A$ (see Figure~\ref{fig:simpleex}
for a simple example).

For a broad class of inputs, however, a unique limit does exist. The characterization
of this class was conjectured by Chen~\cite{Chen-MS98} and is our first result.
This was already stated as Theorem~\ref{thm:characterizeUB} in the Introduction; we restate
it here in slightly different notation.
\par\smallskip
{\bf Theorem~\ref{thm:characterizeUB}.}      
{\it 
A balanced graph function~$\beta$ is the unique balanced graph function in its
equivalence class if and only if  $G_\beta^w$ is strongly connected for all~$w$.}
\par\smallskip
{\it Proof of $\Rightarrow$}: 
Let $\beta$ be balanced.  Suppose $G_\beta^w$ is not strongly connected,
and that $w$ is the largest value for which this is the case. 
Note that since $\beta$ is balanced, $G_\beta^w$ has no sources
or sinks, and hence must contain at least one source strongly connected
component (scc)~$S$ and one sink scc~$T$, both of
which are non-trivial (i.e., contain at least two vertices or one self-loop).
Moreover, by maximality of~$w$,
the weights of external edges (between scc's) in $G_\beta^w$ are all exactly~$w$.
Choose vertices $s\in S$ and $t\in T$, and consider a directed simple
path~$P$ from $s$ to~$t$ in~$G_\beta$.
(Such a path exists because
$G_\beta$ is strongly connected.)  Let $\varepsilon>0$ be smaller than the gap
between any two distinct values of $\beta$. Then in $G_{\beta+\varepsilon T}$,
the length of~$P$ is $\varepsilon$ less than it is in~$G_\beta$.

Now apply balancing operations in any order to the graph function $\beta+\varepsilon T$ (which is  equivalent to~$\beta$), to balance it (in the
limit).
We claim that no balancing operation will ever be applied at any vertex inside
a non-trivial scc of~$G_\beta^w$, and hence at either~$s$ or~$t$.
To see this, note that such a balancing operation can only occur if the weight
of some external edge incident on the scc reaches a value larger than~$w$.
But in $\beta+\varepsilon T$ no external edge has weight larger than~$w$
(this was true in $\beta$ and the shift at~$T$ has increased the
weight only of edges leaving~$T$, but to a value less than~$w$), so balancing
operations at the vertices outside the scc's cannot increase the weight of edges above~$w$.
Hence in the limiting (and hence balanced) graph function, the length of the path~$P$
from~$s$ to~$t$ will remain $\varepsilon$ less than it is in~$\beta$. 
(The length of a path can be altered only by a balancing operation
at one of its endpoints.) Hence this limiting graph function is different from~$\beta$.

{\it Proof of $\Leftarrow$}:
Let $\alpha \sim \beta$ be distinct balanced graph functions.
We show that there is a $w$ such that neither 
$G_\beta^w$ nor $G_\alpha^w$ is strongly connected.

Let $w$ be the largest value such that $G_\beta^w\neq G_\alpha^w$
(the supremum is achieved as the graph is finite), and suppose
w.l.o.g.\ there is an edge $(u,v)$ in $G_\beta^w-G_\alpha^w$.
Then $G_\beta^w$ cannot be strongly connected because if it were,
consider any cycle in~$G_\beta^w$ which includes~$(u,v)$.
The weight of this cycle is larger in $G_\beta$ than in $G_\alpha$ because all weights
larger than~$w$ in the cycle are identical in both functions, and in~$G_\beta$ 
all remaining weights equal~$w$, while in $G_\alpha$ all are at most~$w$
and at least one is strictly less than~$w$. Hence $\alpha\not\sim\beta$,
which is a contradiction.

Likewise, if there is an edge in $G_\alpha^w-G_\beta^w$ then
$G_\alpha^w$ is not strongly connected. This leaves only the
case $G_\alpha^w \subset G_\beta^w$.
Let $U$ be the set of vertices from
which $u$ can be reached in $G_\beta^w$, and $V$ be the set of
vertices reachable from $v$ in $G_\beta^w$.
As shown in the previous paragraph, these sets are disjoint. Each set must contain
a non-trivial scc (because $\beta$ is balanced and therefore
$G_\beta^w$ contains no sources or sinks). These scc's (call
them $U'$ and $V'$) must also be strongly connected in $G_\alpha^w$; 
in fact, we argue that $\alpha$ equals $\beta$ on the edges of $G_\beta^w$
within each scc. For, consider any cycle using edges of $G_\beta^w$
in the scc. The cycle may contain edges heavier than $w$ (which are
identical in $\alpha$ and $\beta$), and some edges that are equal to~$w$
in $\beta$ and $\leq w$ in~$\alpha$. But then since $\alpha \sim \beta$, 
these edges must also equal $w$ in $\alpha$. So
$\alpha$ and $\beta$ are identical on the edges of $G_\beta^w$ in the scc. Now, if
$G_\alpha^w$ is strongly connected, then it contains a path
from $V'$ to $U'$. Then since $G_\alpha^w \subset G_\beta^w$,
this path also lies in~$G_\beta^w$. Contradiction.
\qed
\par\medskip
In light of Theorem~\ref{thm:characterizeUB}, 
we say that a graph function~$\alpha$ has the {\it Unique Balance (UB)\/}
property if there is a balanced $\beta \sim \alpha$ such that $G_\beta^w$ is strongly
connected for all~$w$.  By the convergence theorem (see Appendix~\ref{app:yi-reproof}),
on any input matrix with the UB property the original Osborne-Parlett-Reinsch balancing algorithm
will converge to a unique balanced matrix.  In an earlier version of the present
paper~\cite{SSSTOC}, we analyzed the rate of convergence of (a suitable variant of)
the algorithm in the UB case, and derived the same bound as we now obtain for general
matrices in Theorem~\ref{thm:main}.
In the remainder of this paper, we turn our attention to the general case and leave 
Theorem~\ref{thm:characterizeUB} as an interesting structural result.

%%%%%%%%%%
%\input{twophaseconv}
\section{Convergence of the two-phase algorithm}\label{sec:convergence}
%%%%%%%%%%
The goal of this section is to prove that the two-phase variant of the iterative
balancing algorithm described in the Introduction converges and outputs
a $\varepsilon$-balanced graph function.  Note that convergence does not
follow from the general convergence proof given in Appendix~\ref{app:yi-reproof},
since the sequence of balancing operations in this case is not ``fair" in the sense that
every vertex is guaranteed to appear arbitrarily often.

Convergence of the two-phase variant follows immediately from Theorem~\ref{thm:newconv}
stated in the Introduction, which we restate here in graph function terms.
\par\smallskip
{\bf Theorem~\ref{thm:newconv}.}
{\it 
For any initial graph function~$\alpha$, any fair sequence of raising (resp., lowering)
operations converges to a unique raising-balanced graph function~$\alpha^R$
(resp., lowering-balanced graph function~$\alpha^L$),
independent of the order of the operations.
}
\par\smallskip

To deduce that the output of the two-phase algorithm is indeed $\varepsilon$-balanced,
we need the following additional simple observation.
\begin{prop}\label{prop:twophasebasic}
A lowering operation
cannot increase any raising imbalance $\rho^R_v$.  Symmetrically, a
raising operation cannot increase any lowering imbalance~$\rho^L_v$.
\end{prop}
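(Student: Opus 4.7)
The plan is to write a lowering operation at a vertex $u$ as a shift (in the sense of Section~\ref{sec:prelim}) by $x = (\alpha_u^{\text{in}} - \alpha_u^{\text{out}})/2 > 0$ at the singleton set $\{u\}$, and then analyze, edge by edge, how this shift can affect the raising imbalance $\rho^R_v = \max\{0,\alphavout-\alphavin\}$ at an arbitrary vertex $v$. Such a shift raises every edge leaving $u$ by $x$, lowers every edge entering $u$ by $x$, and leaves every other edge weight unchanged---including the self-loop at $u$, which receives $+x$ as an out-edge and $-x$ as an in-edge.

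The argument then splits into two cases. For $v \neq u$, the only edges incident to $v$ whose weights change are $(v,u)$, an out-edge of $v$ whose weight decreases by $x$, and $(u,v)$, an in-edge of $v$ whose weight increases by $x$. Since $\alphavout$ is a maximum over out-edges at $v$ it can only decrease or stay the same, and since $\alphavin$ is a maximum over in-edges at $v$ it can only increase or stay the same. Hence $\alphavout-\alphavin$ weakly decreases, and therefore so does $\rho^R_v$.

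For $v = u$, the operation is performed only when $\rho^L_u > 0$, which forces $\alpha_u^{\text{out}} - \alpha_u^{\text{in}} = -2x$, so $\rho^R_u = 0$ before the step. Every non-self-loop out-edge of $u$ is shifted up by exactly $x$ while the self-loop is frozen, so the new value of $\alpha_u^{\text{out}}$ exceeds its old value by at most $x$; symmetrically the new value of $\alpha_u^{\text{in}}$ undershoots its old value by at most $x$. The new value of $\alpha_u^{\text{out}}-\alpha_u^{\text{in}}$ is therefore at most $-2x + x + x = 0$, so $\rho^R_u$ remains zero.

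I do not expect any real obstacle. The one place that requires care is the $v = u$ case when the self-loop at $u$ realizes one of the relevant maxima; the crucial point there is that the self-loop is a fixed point of the shift at $\{u\}$, which is exactly what makes the $\pm x$ bounds on the changes to $\alpha_u^{\text{out}}$ and $\alpha_u^{\text{in}}$ valid. The symmetric statement---that a raising operation cannot increase any lowering imbalance $\rho^L_v$---follows by swapping the roles of ``in''/``out'' and ``raising''/``lowering'' throughout the argument above.
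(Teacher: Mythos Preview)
Your proof is correct and follows essentially the same approach as the paper: a case split on whether the vertex whose raising imbalance is being examined coincides with the vertex being lowered, together with the observation that the operation only decreases out-edges and increases in-edges at neighboring vertices. The paper's proof is more terse, simply asserting that lowering at $v$ ``keeps $\rho^R_v$ at zero'' without spelling out the $\pm x$ bound you give; your treatment of the self-loop subtlety in the $v=u$ case is a welcome elaboration rather than a different argument.
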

\begin{proof}
We give the proof only for lowering operations; the proof for raising operations is symmetrical.
Consider a lowering operation at vertex~$v$, and assume that it has a non-zero effect, i.e.,
$\rho^L_v>0$.  Lowering at~$v$ reduces $\rho^L_v$ (to zero unless the maximum incoming
edge at~$v$ is a self-loop in which case the reduction is to $\rho_v^L/2$) and keeps $\rho^R_v$ at zero.
The only other imbalances that can be affected are at neighbors of~$v$.  Lowering at~$v$
decreases edge weights $\alpha_{uv}$, which cannot increase~$\rho^R_u$.
Similarly, lowering at~$v$ increases edge weights $\alpha_{vw}$, which again cannot
increase~$\rho^R_w$.
\end{proof}

It is now easy to see that the two-phase algorithm terminates with an $\varepsilon$-balanced
graph function.  By Theorem~\ref{thm:newconv} each of the two phases terminates with 
the appropriate balance condition.  At the end of the raising phase the function is
$\varepsilon$-raising balanced, and by Proposition~\ref{prop:twophasebasic} it remains
so throughout the lowering phase.  At the end of the lowering phase the resulting
graph function is both $\varepsilon$-raising balanced and $\varepsilon$-lowering
balanced, and hence $\varepsilon$-balanced as required.

The remainder of this section is devoted to proving Theorem~\ref{thm:newconv}.  We focus
on raising operations; the proof for lowering operations follows by a symmetrical argument.

We introduce the following notation.  Fix the initial graph function~$\alpha$
and a particular fair infinite sequence of raising operations.  
Let $\alpha=\alpha(0), \alpha(1),\ldots,\alpha(t),\ldots$ be the resulting sequence of graph
functions obtained.  Also, for each $t\ge0$ let the {\it raising vector\/} $r(t)=(r_v(t))$ record
the cumulative amount by which each vertex has been raised.  I.e., $r(0)=0$ and, if the $t$th raising 
operation is at vertex~$v$, then $r_v(t) = r_v(t-1)+\rho^R_v/2$ and $r_u(t) = r_u(t-1)$ for
all $u\ne v$.  Note that the vector $r(t)$ completely specifies\footnote{However,
the converse is not quite true: given $\alpha(t)$ we can deduce $r(t)$ only up to an additive constant.} $\alpha(t)$ via the relation 
\begin{equation}\label{eqn:alphar}
   \alpha_{uv}(t) = \alpha_{uv}(0) + r_v(t) - r_u(t).
\end{equation}

\begin{lemma}\label{lem:rbounded}
The sequence $r(t)$ is increasing and bounded above, and hence converges to a limit
vector~$r^*$ as $t\to\infty$.
\end{lemma}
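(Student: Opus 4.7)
Plan. The lemma has two claims: monotonicity and boundedness. Monotonicity is immediate from the update rule: when the $t$th raising operation is at vertex~$v$, we set $r_v(t)=r_v(t-1)+\rho^R_v(t-1)/2$ and $r_u(t)=r_u(t-1)$ for $u\neq v$, and $\rho^R_v\ge 0$ by definition. Hence $r(t)$ is coordinatewise non-decreasing.

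For boundedness, the first step is to show that the maximum edge weight $W(t):=\max_{u,v}\alpha_{uv}(t)$ is non-increasing. This is a local calculation: a raising at~$v$ by $x:=\rho^R_v(t-1)/2$ adds $x$ to every non-self-loop incoming edge at~$v$, giving it new weight at most $\alphavin(t-1)+x=(\alphavin(t-1)+\alphavout(t-1))/2\le\alphavout(t-1)\le W(t-1)$; outgoing edges only decrease, and self-loops and edges not incident to~$v$ are unaltered. Hence $W(t)\le W_0:=W(0)$ for every~$t$. Combining with \eqref{eqn:alphar}, for each edge $(u,v)\in G_\alpha$ we get $r_v(t)-r_u(t)\le W_0-\alpha_{uv}(0)$. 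By strong connectivity of $G_\alpha$, summing this inequality along a directed path (of length at most $n-1$) between any ordered pair of vertices yields a constant $C=C(\alpha(0))$ such that $|r_v(t)-r_u(t)|\le C$ for all $u,v,t$.

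The last and most delicate step is to pass from bounded differences to absolute upper bounds on each coordinate. The key observation is a contraction: from $\alphavin(t-1)\ge\alpha_{uv}(0)+r_v(t-1)-r_u(t-1)$ for any in-neighbor~$u$ of~$v$, a positive raising at~$v$ satisfies
\[
r_v(t)\le \tfrac{1}{2}\bigl(r_v(t-1)+r_u(t-1)\bigr)+\tfrac{1}{2}\bigl(W_0-\alpha_{uv}(0)\bigr),
\]
so each such raising drags $r_v$ halfway toward $r_u+W_0-\alpha_{uv}(0)$. Combined with fairness (infinitely many selections at every vertex) and strong connectivity (there is a directed path from any one vertex to every other), a bound on any single coordinate propagates through $G_\alpha$ to bound every coordinate.

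The chief obstacle is to establish that at least one coordinate $r_{v_0}(t)$ remains bounded in the first place, thereby ruling out the degenerate scenario in which $\min_v r_v(t)\to\infty$ and every coordinate diverges in lockstep within a band of width~$C$. I expect this to follow by a contradiction argument that combines the above contraction with the cycle invariance $\sum_{e\in C}\alpha_e(t)=\sum_{e\in C}\alpha_e(0)$: if every $r_v$ grew without bound, the contraction would force the incoming imbalance at some vertex to eventually violate the upper bound $W_0$ on edge weights along a cycle, contradicting the monotonicity of~$W(t)$.
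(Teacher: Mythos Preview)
Your argument is correct up through the bounded-differences step: monotonicity is immediate, $W(t)$ is non-increasing, and summing the edge inequality $r_v(t)-r_u(t)\le W_0-\alpha_{uv}(0)$ along directed paths (using strong connectivity) yields a uniform bound $|r_v(t)-r_u(t)|\le C$. Your contraction inequality is also valid.

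The genuine gap is exactly where you flag it: ruling out the scenario in which $\min_v r_v(t)\to\infty$. Unfortunately, the ingredients you have assembled---bounded differences, the halfway contraction, fairness, and cycle invariance---are all \emph{relative} constraints on the coordinates of $r(t)$; none of them obstructs a common additive drift. Abstractly, non-decreasing sequences $r_1,\dots,r_n$ with $|r_i-r_j|\le C$ and satisfying $r_i(t)\le\tfrac12(r_i(t-1)+r_j(t-1))+D$ at every update can all diverge (take $n=2$, $r_1(t)=r_2(t)=tD$). Cycle invariance is already what underlies the lower bound on edge weights (and hence the bounded-differences conclusion); invoking it again does not supply an absolute anchor. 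Your sketched contradiction (``the contraction would force the incoming imbalance \ldots\ to violate the upper bound $W_0$'') does not go through: bounded edge weights only bound each individual $\rho^R_v$, not the infinite sum $\sum_t \rho^R_{v(t)}/2=\sum_v r_v(t)$.

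The paper closes this gap with a different mechanism: a scalar potential that decreases by a definite amount at every nontrivial raise. After noting that $\alpha(t)$ is bounded above (as you do) and below (via the cycle-invariant argument of Lemma~\ref{lem:bdedbelow}), it sets
\[
\Phi(t)=\sum_{(u,v)\in E,\,u\ne v} N_{uv}(t)!\,\alpha_{uv}(t),
\]
where $N_{uv}(t)$ is the rank of $\alpha_{uv}(t)$ among all edge weights. The factorial weights make the currently heaviest out-edge at $v$ dominate all lighter in-edges, so a raise at $v$ forces $\Phi(t+1)-\Phi(t)\le -\rho^R_v/2$. Since $\Phi$ is bounded (because every $\alpha_{uv}(t)$ is), the total raising $\sum_v r_v(t)$ is bounded, and hence each $r_v(t)$ is. This is the missing idea: you need some quantity that strictly decreases by the amount raised, and your bounded-differences/contraction framework does not produce one.
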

\begin{proof}
The fact that $r(t)$ is increasing is immediate from its definition.  To see that it is bounded,
observe first that the graph function $\alpha(t)$ is bounded above and below for all~$t$;
indeed, it is immediately clear from the definition of balancing that $\max_{u,v}\alpha_{uv}(t)$
cannot increase with~$t$, while the less obvious fact that $\alpha_{uv}(t)$ is bounded below 
follows from Lemma~\ref{lem:bdedbelow} in Appendix~\ref{app:yi-reproof}.

Now define the function $\Phi(t) = \sum_{(u,v)\in E, u\ne v} N_{uv}(t)! \alpha_{uv}(t)$, where $N_{uv}(t)$
is the number of edges $(u',v')$ of weight $\alpha_{u'v'}(t)\le\alpha_{uv}(t)$.
We claim that, if the $t$th raising operation is at vertex~$v$, the change in~$\Phi$ is
bounded by
\begin{equation}\label{eq:facbound}
   \Phi(t+1)-\Phi(t) \le -\rho^R_v/2.
\end{equation}
Since by the observations in the previous paragraph $\Phi(t)$ is always bounded, 
and the increase in $r_v(t)$ is exactly $\rho^R_v/2$,
we conclude from~(\ref{eq:facbound}) that $r_v(t)$ is also bounded.

To prove~(\ref{eq:facbound}), represent each edge $(u,v)$ by a particle on the real
line at position~$\alpha_{uv}(t)$.  We will view $[t,t+1]$ as a continuous time interval,
during which each particle $(u,v)$ moves at constant speed from $\alpha_{uv}(t)$
to $\alpha_{uv}(t+1)$.  We now compute the rate of change of~$\Phi$ under a 
raising operation at vertex~$v$ (assuming that $\rho^R_v>0$ so that the raising
operation is non-trivial; otherwise, $\Phi$ does not change).  The only particles
that move are $(u,v)$ and $(v,w)$ as $u,w$ range over neighbors of~$v$; each particle
$(u,v)$ moves a distance $\rho^R_v/2$ to the right, and each $(v,w)$ moves the same
distance to the left.
Let $\bar u, \bar w$ be such that $\alpha_{{\bar u}v} = \max_{(u,v)}\alpha_{uv}(t)$
and $\alpha_{v{\bar w}} = \max_{(v,w)}\alpha_{vw}(t)$, so that 
$\rho^R_v = \alpha_{v\bar w}(t) - \alpha_{\bar u v}(t)$.   
The rate of change of $\Phi(t')$ for $t'$ in the interval $[t,t+1)$ is given by
\begin{eqnarray*}
   \frac{\partial\Phi(t')}{\partial t'} &=& \frac{\rho^R_v}{2}\Bigl(\sum_{(u,v)} N_{uv}(t')! - \sum_{(v,w)} N_{vw}(t')!\Bigr) \cr
        &\le& \frac{\rho^R_v}{2}\Bigl((N_{v\bar w}(t') -1) (N_{v\bar w}(t')-1)! - N_{v\bar w}(t')!\Bigr) \cr
        &\le&  -\frac{\rho^R_v}{2}. 
\end{eqnarray*}
The negative term in the second line arises from the contribution of the particle $(v,\bar w)$,
whose edge remains heaviest throughout.  The positive term arises because there
are at most $N_{v\bar w}(t') -1$ particles to the left of $(v,\bar w)$, each of which
contributes at most $(N_{v\bar w}(t')-1)!$.

As the particles move and pass other particles, the coefficients $N_{uv}(t')$ change.
However, observe that the above calculation is valid at all times 
$t'\in [t,t+1)$, since it relies only on the fact that $(v,\bar w)$ is the rightmost moving
particle.  Hence we deduce the bound~(\ref{eq:facbound}) claimed earlier, which 
completes the proof.
\end{proof}

It follows immediately from Lemma~\ref{lem:rbounded} and equation~(\ref{eqn:alphar})
that $\alpha(t)$ also converges to a limit~$\alpha^*$, and  
by continuity and fairness $\alpha^*$ must be raising-balanced.  It remains to show that $\alpha^*$
is independent of the sequence of raising operations.  For this we require the following
monotonicity property of raising operations.

\begin{lemma}\label{lem:monotonic}
Let $r,s$ be raising vectors obtained from a common initial graph function~$\alpha$ via
different sequences of raising operations.  Suppose that $r\le s$, and let $r',s'$ be the new
raising vectors obtained from $r,s$ after one additional raising operation at a common
vertex~$v$.  Then $r'\le s'$.
\end{lemma}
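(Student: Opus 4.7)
My plan is to reduce the inequality $r' \le s'$ to a componentwise comparison and exploit the explicit relationship \eqref{eqn:alphar} between a raising vector and the graph function it generates. Since the two sequences start from the same $\alpha(0)$, I have $\alpha_r := \alpha(0) + r$ and $\alpha_s := \alpha(0) + s$ (in the shift sense), and writing $d := s - r \ge 0$ gives the clean identity $\alpha_s(u,w) = \alpha_r(u,w) + d_w - d_u$ for every edge $(u,w)$.

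For every vertex $u \ne v$, neither $r_u$ nor $s_u$ is changed by the raising operation at $v$, so $r'_u = r_u \le s_u = s'_u$ is immediate. The whole content of the lemma lies at the single coordinate $v$: I must show that
\[
  r_v + \tfrac{1}{2}\rho^R_v(\alpha_r) \;\le\; s_v + \tfrac{1}{2}\rho^R_v(\alpha_s),
\]
i.e., $\rho^R_v(\alpha_r) - \rho^R_v(\alpha_s) \le 2 d_v$. So the task is to control how much the raising imbalance at $v$ can \emph{drop} when we pass from $\alpha_r$ to $\alpha_s$.

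The key step is a one-line comparison of the in- and out-maxima at $v$. Using $\alpha_s(v,w) = \alpha_r(v,w) + d_w - d_v$ and $d_w \ge 0$, I get
\[
  \alphavout(\alpha_s) \;=\; -d_v + \max_w\bigl[\alpha_r(v,w) + d_w\bigr] \;\ge\; \alphavout(\alpha_r) - d_v.
\]
Symmetrically, using $\alpha_s(u,v) = \alpha_r(u,v) + d_v - d_u$ and $d_u \ge 0$,
\[
  \alphavin(\alpha_s) \;=\; d_v + \max_u\bigl[\alpha_r(u,v) - d_u\bigr] \;\le\; \alphavin(\alpha_r) + d_v.
\]
Subtracting these gives $\alphavout(\alpha_s) - \alphavin(\alpha_s) \ge \alphavout(\alpha_r) - \alphavin(\alpha_r) - 2 d_v$, and since $x \mapsto \max(0,x)$ is $1$-Lipschitz and monotone, $\rho^R_v(\alpha_s) \ge \rho^R_v(\alpha_r) - 2 d_v$. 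Plugging this back yields the desired inequality at $v$.

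I do not anticipate a serious obstacle; the only thing to be careful about is the asymmetric way in which $d_v$ enters the bounds for $\alphavout$ and $\alphavin$ (both inequalities produce a $-d_v$ and a $+d_v$ respectively that, importantly, are of the same sign after the subtraction), and the routine check that truncation at $0$ preserves the $2d_v$-slack. Everything else is direct algebra from the shift identity.
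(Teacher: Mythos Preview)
Your argument is correct and essentially the same as the paper's, just organized differently. The paper absorbs the two separate bounds you prove for $\alphavout$ and $\alphavin$ into a single closed-form expression
\[
  r'_v \;=\; \tfrac{1}{2}\max\Bigl\{\,2r_v,\ \max_{w:v\to w}(\alpha_{vw}+r_w)+\min_{u:u\to v}(r_u-\alpha_{uv})\,\Bigr\},
\]
which is visibly coordinatewise nondecreasing in $r$, so $r\le s$ gives $r'_v\le s'_v$ in one line; your route via $d=s-r$ and the $1$-Lipschitz truncation is the same monotonicity argument unpacked into its in- and out-pieces.
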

\begin{proof}
Clearly we have $r'_u=r_u\le s_u=s'_u$ for all $u\ne v$, so we need only show that $r'_v\le s'_v$.
By definition of the raising operation at~$v$, we have (using equation~(\ref{eqn:alphar}))
\begin{eqnarray*}
r'_v &=& r_v + \frac{\rho_v^R}{2}\\
&=& r_v + \frac{1}{2} \max\bigl\{0,\max_{w: v\to w} (\alpha_{vw} +r_w -r_v) - \max_{u:u\to v} (\alpha_{uv} +r_v -r_u)\bigr\}\\
&=& \frac{1}{2} \max\bigl\{2r_v,\max_{w: v\to w} (\alpha_{vw} +r_w) + \min_{u:u\to v} (r_u - \alpha_{uv})\bigr\}.\\
\end{eqnarray*}
(Here, and subsequently, we use the notation $u\to v$ to indicate that there is an edge
$(u,v)$ in~$G_\alpha$.)
Since the right-hand side here is monotonically increasing in all entries of~$r$, and $r\le s$,
we conclude that $r'_v\le s'_v$ as required.
\end{proof}

\begin{lemma}\label{lem:twophaseconv}
For any initial graph function~$\alpha$ and sequence of raising vectors
$r(1), r(2), \ldots, r(t), \ldots$ obtained from a fair sequence of raising
operations, the limit $r^*=\lim_{t\to\infty}r(t)$ exists and depends only on~$\alpha$,
not on the sequence of raising operations.
\end{lemma}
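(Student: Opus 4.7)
The plan is straightforward: existence of the limit $r^*=\lim_{t\to\infty}r(t)$ is already furnished by Lemma~\ref{lem:rbounded}, so the entire content is to show independence from the sequence of raising operations. The guiding idea is a monotonicity (or maximum-principle) argument: I will show that the limit $r^*$ produced by any fair sequence coordinate-wise dominates every raising vector reachable from $\alpha$ by any finite sequence of raising operations. Once this is established, if $r^*$ and $s^*$ are the limits of two fair sequences, then $r^*$ dominates every finite prefix $s(t)$ of the second sequence, hence $s^*\le r^*$; by symmetry $r^*\le s^*$, so $r^*=s^*$.

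To execute this, I would first record two preliminary facts. By continuity together with fairness, the limit graph function $\alpha^*$ determined by $r^*$ via~(\ref{eqn:alphar}) is raising-balanced: if $\rho^R_v(\alpha^*)>0$ for some $v$, then $\rho^R_v(\alpha(t))$ stays bounded below by a positive constant for all sufficiently large $t$ by continuity, so each of the infinitely many raising operations at $v$ guaranteed by fairness would increment $r_v$ by a bounded-below amount, contradicting convergence of $r(t)$. Second, as was essentially computed in the proof of Lemma~\ref{lem:monotonic}, a raising operation at $v$ acts on the raising vector by the closed-form update
\[
s_v \;\mapsto\; \max\Bigl\{s_v,\ \tfrac{1}{2}\bigl[\max_{w:\,v\to w}(\alpha_{vw}(0)+s_w) + \min_{u:\,u\to v}(s_u-\alpha_{uv}(0))\bigr]\Bigr\}.
\]
Rewriting $\max_u[\alpha_{uv}(0)+r^*_v-r^*_u]=r^*_v-\min_u[r^*_u-\alpha_{uv}(0)]$, one sees that raising-balancedness of $r^*$ is precisely the family of inequalities
\[
r^*_v \;\ge\; \tfrac{1}{2}\bigl[\max_{w:\,v\to w}(\alpha_{vw}(0)+r^*_w) + \min_{u:\,u\to v}(r^*_u-\alpha_{uv}(0))\bigr] \qquad \text{for every } v,
\]
which means $r^*$ is a coordinate-wise fixed point of every raising update.

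The dominance claim is now an easy induction on the length of the finite raising sequence producing $s$. The base case $s=0\le r^*$ is immediate since every coordinate of $r(t)$ is non-decreasing and starts from $0$. For the inductive step, suppose $s'\le r^*$ and $s$ arises from $s'$ by a raising operation at $v$: clearly $s_u=s'_u\le r^*_u$ for $u\ne v$, and the RHS of the update formula above is monotone in every entry of its argument, so its value at $s'$ is bounded above by its value at $r^*$, which by the fixed-point property simplifies to $r^*_v$; hence $s_v\le r^*_v$. Applying this with $s=s(t)$ for each $t$ and passing to the limit yields the lemma. The only step that requires real care is the sign-sensitive rewrite of the raising-balanced condition into the fixed-point inequality, but this is a one-line calculation; everything else is structural and flows from the monotonicity of the raising update together with the fixed-point nature of any raising-balanced configuration.
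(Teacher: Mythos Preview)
Your proof is correct and follows essentially the same route as the paper: both arguments show that the limit $r^*$ of any fair sequence dominates every finite raising vector (hence dominates any other limit $s^*$), and conclude by symmetry. The paper packages this by invoking Lemma~\ref{lem:monotonic} to compare the second sequence applied to $\alpha$ versus applied to the already raising-balanced $\alpha^*$, whereas you unpack the same monotone update formula and phrase raising-balancedness as a fixed-point inequality; the content is identical.
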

\begin{proof}
Consider some fair sequence of raising operations starting from~$\alpha$, and let
$\alpha(t)$, $r(t)$ be the corresponding sequences of graph functions and raising vectors
respectively.  By Lemma~\ref{lem:rbounded}, the limits $\alpha^*=\lim_{t\to\infty}\alpha(t)$
and $r^*=\lim_{t\to\infty}r(t)$ exist and $\alpha^*$ is raising-balanced.  

Now consider a second sequence of raising operations
also starting from~$\alpha$, and let $s(t)$ be the corresponding
sequence of raising vectors.  Again, by Lemma~\ref{lem:rbounded}, the limit
$s^*=\lim_{t\to\infty} s(t)$ exists.

Suppose we apply this second sequence of raising operations, starting on the one hand
from~$\alpha$ and on the other hand from~$\alpha^*$.  The resulting raising vectors
(with respect to~$\alpha$)
are respectively~$s^*$ and~$r^*$ (the latter because $\alpha^*$ is already raising-balanced).
Applying Lemma~\ref{lem:monotonic} to compare the raising vectors at each step of this process
yields $s^* \le r^*$.  But by symmetry we must also have $r^* \le s^*$, and hence
$r^* = s^*$.  This completes the proof.
\end{proof}

Since the graph function~$\alpha^*$ is completely determined by the raising vector~$r^*$,
Lemma~\ref{lem:twophaseconv}, and its symmetric counterpart for lowering operations,
proves Theorem~\ref{thm:newconv}, which was the main goal of this section.  

Figure~\ref{fig:RLexample} shows an example of a graph function and its associated 
unique corresponding raising- and lowering-balanced functions.  

\begin{figure}[h]
\includegraphics[height=100mm]{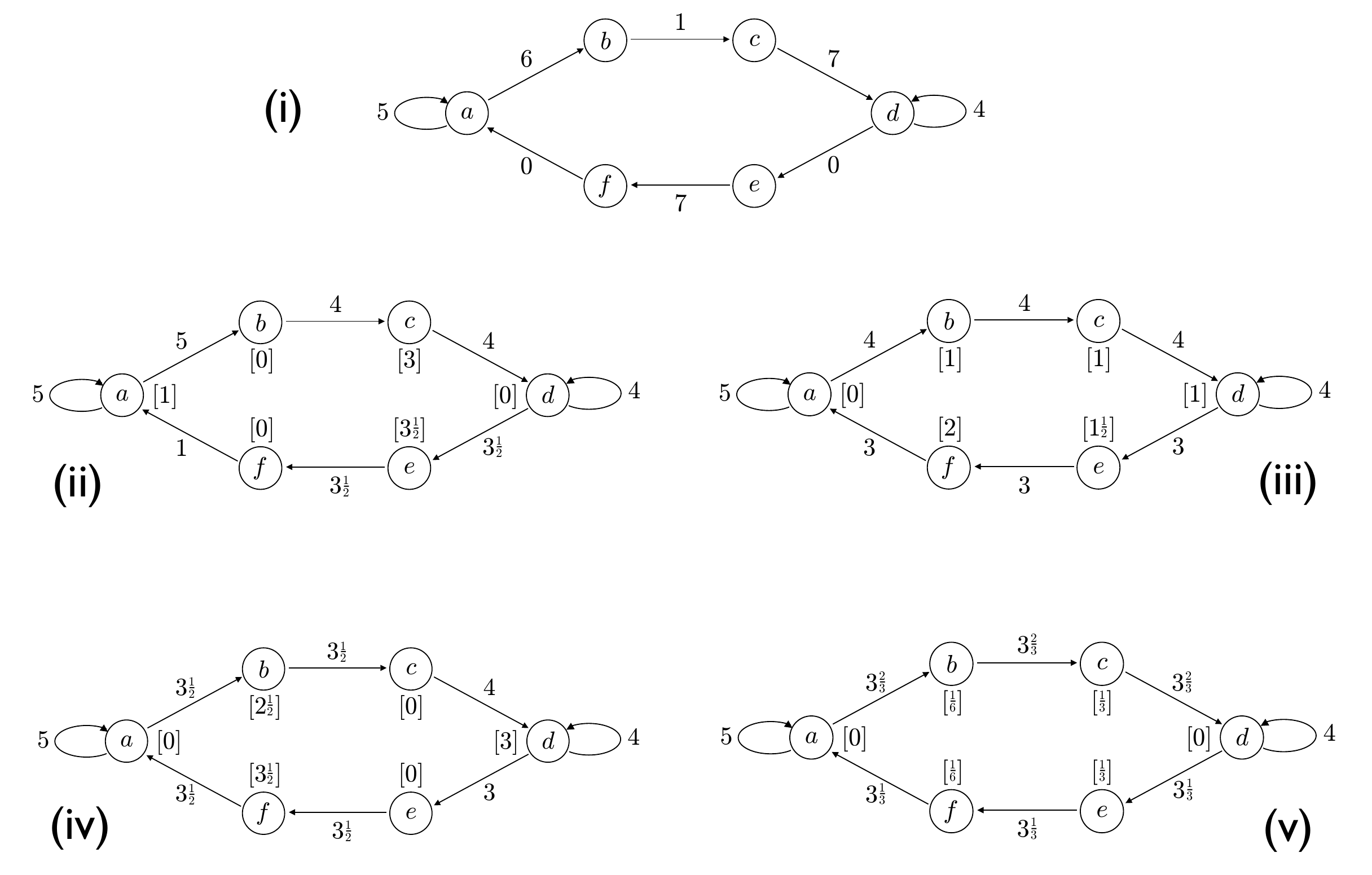}
\caption{\it Example showing convergence of raising and lowering operations.  (i)~A graph 
function~$\alpha$.  (ii)~The raising-balanced graph function~$\alpha^R$ obtained 
in the limit from~$\alpha$ by raising operations; numbers in square brackets at the
vertices show the limiting raising vector~$r^*$. 
(iii)~The balanced graph function~$(\alpha^R)^L$ obtained in the limit by applying
lowering operations to~$\alpha^R$, together with its associated lowering vector.  
(This is the unique balanced
graph function to which the two-phase algorithm ``converges" if both phases
are run to convergence.)
Figures~(iv) and~(v) show the graph functions~$\alpha^L$ and $(\alpha^L)^R$
respectively (and their associated lowering/raising vectors), 
obtained by reversing the order of the raising and lowering phases.
Note that the balanced graph functions $(\alpha^R)^L$ (in~(iii)) and 
$(\alpha^L)^R$ (in~(v)) are not the same; while both versions (raising-lowering and
lowering-raising) converge to a unique balanced graph function, these limits
are not equal.  The graph function~$\alpha$ is not UB.}
\label{fig:RLexample}
\end{figure}

%%%%%%%%%%
%\input{rate}
\section{Rate of convergence}\label{sec:rate}
%%%%%%%%
In this section we will prove the main result of the paper, Theorem~\ref{thm:main} from
the Introduction, which gives a tight bound on the rate of convergence of the iterative
balancing algorithm.  We restate the result here for convenience.
\par\smallskip
{\bf Theorem~\ref{thm:main}.}
{\it On input $(A,\varepsilon)$, where $A$ is an arbitrary irreducible non-negative matrix
with imbalance~$\rho$, the above two-phase $L_\infty$ balancing algorithm
performs $O(n^3\log(n\rho/\varepsilon))$ balancing operations and outputs a
matrix equivalent to~$A$ that is $\varepsilon$-balanced w.h.p.}

We give here a brief, high-level outline of our analysis.  First note that,
by Proposition~\ref{prop:twophasebasic} in the previous section, it suffices to analyze each 
phase of the algorithm separately.  Accordingly, we will analyze only the
raising phase, showing that it achieves an $\varepsilon$-raising-balanced
graph function w.h.p.\ after $O(n^3\log(\rho/\varepsilon))$ raising operations.
By symmetry the same bound holds for the lowering phase: simply observe
that lowering on the graph~$G_\alpha$ is equivalent to raising on the 
transpose graph $G^T_\alpha$.

In the Introduction we already discussed some of the difficulties which prevent us from analyzing the dynamics of the algorithm by tracking a scalar quantity at each vertex, as would be the case for the analogous but much simpler heat-kernel dynamics.  A key ingredient in our solution to the
problem is to introduce a two-dimensional representation of the
graph function~$\alpha$, together with an associated potential function that will enable
our analysis of the raising phase. One coordinate (the vertical coordinate) in this representation,
called the ``height" of a vertex, is based on the unique raising vector~$r^*$
whose existence is guaranteed by Lemma~\ref{lem:twophaseconv} of the previous section;
the second (horizontal) coordinate is the ``level" of the vertex, as alluded to in the Introduction.
This representation, which is specified in detail in Section~\ref{subsec:2drep},
will be instrumental in obtaining upper and lower bounds relating the (global)
potential function to the local imbalances (which govern the effect of a
single raising operation); these bounds are derived in Sections~\ref{subsec:prop1}
and~\ref{subsec:prop2} respectively.  Finally, in Section~\ref{subsec:final}
we conclude the proof of Theorem~\ref{thm:main}.

%%%%%%%%%%%%%%%%%
\subsection{A two-dimensional representation}\label{subsec:2drep} 
Throughout this and the next two subsections, we focus exclusively on the raising phase.
As we observed above, exactly the same rate of convergence will apply to the lowering phase.

Recall from Lemma~\ref{lem:twophaseconv} that, for any initial graph function~$\alpha$,
the limiting raising vector $r^*$, which records the asymptotic total amount of raising performed
at each vertex by any infinite fair sequence of raising operations, is uniquely defined and specifies
a unique raising balanced graph function~$\alpha^R$.
We may therefore use $r^*$ to define a {\it height function}  $$
   \y_v := -r^*_v, $$
i.e., $-\y_v$ is the amount of raising that remains to be performed at vertex~$v$ in 
order to reach the raising-balanced graph function~$\alpha^R$.  (Note that, throughout,
all heights are non-positive and converge monotonically to zero.  
This convention turns out to be convenient in our analysis.)  
The heights are related to $\alpha$ and~$\alpha^R$ by
\be \alpha^R_{uv}-\alpha_{uv}=\y_u-\y_v \label{ydiff}, \ee
which is just a restatement of equation~(\ref{eqn:alphar}).

In addition to the heights~$\y_v$, we will need to introduce a second coordinate~$\x_v$
defined by 
\be x_v := \max_{u:u \to v} \alpha^R_{uv}. \label{def:xR} \ee
We refer to 
$\x_v$ as the {\it level\/} of vertex~$v$.  Note that (intuitively at least), lower level
vertices can only reliably be close to raising-balanced after higher level ones are
close, so the levels capture one of the distinguishing features of the problem.
The two-dimensional representation $(x_v,\y_v)$ of vertices $v$ will be key to
the remainder of our argument.

Figure~\ref{fig:2dexample} shows the two-dimensional representation of the graph
function in Figure~\ref{fig:RLexample}(i). The horizontal coordinates can be read off
from the maximum incoming edges at each vertex in Figure~\ref{fig:RLexample}(ii), 
and the vertical coordinates from the square brackets in the same figure.

\begin{figure}[h]
\includegraphics[height=60mm]{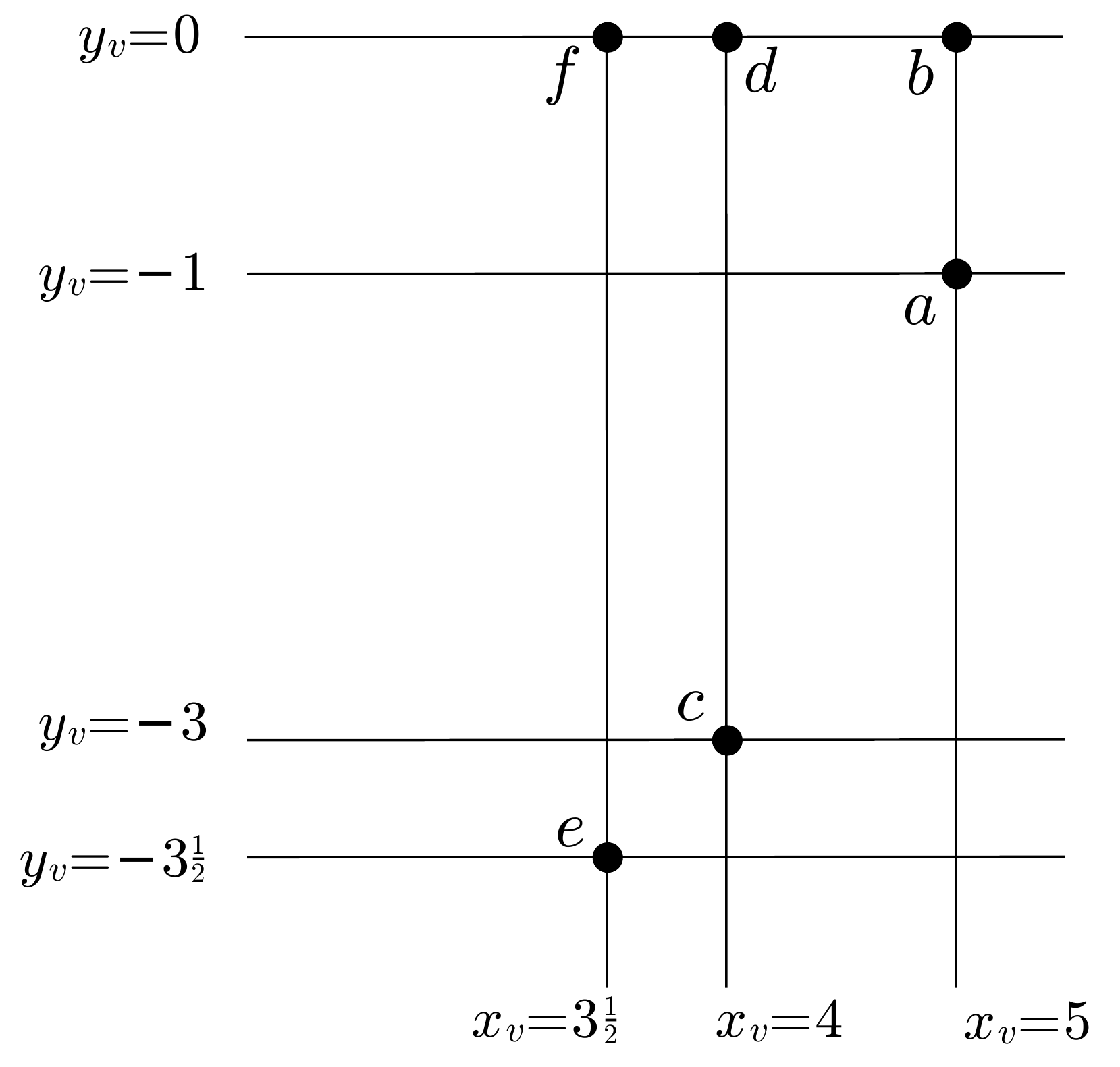}
\caption{\it Two-dimensional representation of the graph function~$\alpha$ shown
in Figure~\ref{fig:RLexample}(i).}
\label{fig:2dexample}
\end{figure}

We will measure progress of the raising phase by means of the global potential function
$\Psi=-\sum\y_v$, which is the total amount of raising that remains to be done to reach~$\alpha^R$.  
In order to relate $\Psi$ to the local imbalances~$\rho_v^R$, 
it will be convenient to introduce a closely related global quantity~$h$, which is just the
difference between the maximum and minimum heights. That is, setting
$\y_{\min}=\min_u \y_{u}$ and $\y_{\max}=\max_u \y_u$, we define $$
 h:=\y_{\max}-\y_{\min}.   $$
(As we shall see shortly (Corollary~\ref{cor:ymax0}), it is always the case that $\y_{\max}=0$
so in fact  $h=-\y_{\min}$.)  Note that $\Psi$ and $h$ differ by at most a factor
of~$n$, namely: \be \Psi\ge h\ge \frac{1}{n}\Psi. \label{psi-h} \ee

The key ingredients in our analysis are the following two bounds relating~$h$ to
the local imbalances~$\rho_v^R$. 

\begin{prop}\label{prop:h-lbR} 
For any graph function~$\alpha$, we have $\rho^R/2\le h$.
\end{prop}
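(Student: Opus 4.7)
The plan is to upper bound the raising imbalance $\rho_v^R$ at every vertex~$v$ by using equation~(\ref{ydiff}) to relate $\alpha$ to the limiting raising-balanced function~$\alpha^R$, and then exploiting only the crude fact that all heights lie in the interval $[\y_{\min},\ymax]$ of length~$h$.

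First I would fix a vertex~$v$ and rewrite its outgoing and incoming maxima for~$\alpha$ in terms of the corresponding weights of~$\alpha^R$. From equation~(\ref{ydiff}), $\alpha_{vw}=\alpha^R_{vw}+\y_w-\y_v$ for every outgoing edge of~$v$ and $\alpha_{uv}=\alpha^R_{uv}-\y_u+\y_v$ for every incoming edge, so
\begin{align*}
\alphavout &= -\y_v + \max_{w\colon v\to w}\bigl(\alpha^R_{vw}+\y_w\bigr),\\
\alphavin  &= \phantom{-}\y_v + \max_{u\colon u\to v}\bigl(\alpha^R_{uv}-\y_u\bigr).
\end{align*}
Using $\y_w\le\ymax$ inside the first max and $-\y_u\ge-\ymax$ inside the second, and writing $M_v^{\text{out}}:=\max_{w\colon v\to w}\alpha^R_{vw}$ and $M_v^{\text{in}}:=\max_{u\colon u\to v}\alpha^R_{uv}$ for the analogous maxima under~$\alpha^R$, I would then obtain
\[
\alphavout - \alphavin \;\le\; -2\y_v + 2\ymax + \bigl(M_v^{\text{out}}-M_v^{\text{in}}\bigr).
\]

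The final step is to observe that both remaining contributions are controlled. Since $\alpha^R$ is raising-balanced, $M_v^{\text{out}}\le M_v^{\text{in}}$, and since $\y_v\ge\y_{\min}$, $-2\y_v\le-2\y_{\min}$. Substituting both bounds gives $\alphavout-\alphavin\le 2(\ymax-\y_{\min})=2h$, so $\rho_v^R=\max\{0,\alphavout-\alphavin\}\le 2h$. Taking the maximum over~$v$ yields $\rho^R\le 2h$, i.e.\ $\rho^R/2\le h$, which is the claimed bound.

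I do not anticipate any serious obstacle here: once the height function $\y$ is available via Lemma~\ref{lem:twophaseconv} and the identity~(\ref{ydiff}) is in hand, the proposition reduces to a one-line calculation. The only subtlety worth flagging is that the two relevant maxima need to be bounded in opposite directions---one from above, the other from below---but both steps rely solely on the trivial range estimate $\y_{\min}\le\y_u\le\ymax$ together with the raising-balance of~$\alpha^R$.
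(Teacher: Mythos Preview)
Your proof is correct. It differs from the paper's only in packaging: the paper first establishes the intermediate Lemma~\ref{y-uvw}, which shows that for any $v$ with $\rho_v>0$ there exist specific neighbours $u,w$ with $\y_v+\rho_v/2\le(\y_u+\y_w)/2$, and then deduces the proposition in one line from $(\y_u+\y_w)/2-\y_v\le\ymax-\y_{\min}=h$. You instead bound $\alphavout-\alphavin$ directly by manipulating the two $\max$ expressions, using only the range estimate $\y_{\min}\le\y\le\ymax$ together with the raising-balance inequality $M_v^{\text{out}}\le M_v^{\text{in}}$ for~$\alpha^R$. Both arguments rest on exactly the same two ingredients---equation~(\ref{ydiff}) and the fact that~$\alpha^R$ is raising-balanced---so the underlying idea is the same; your route to the proposition itself is arguably the more streamlined one. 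The paper's detour through Lemma~\ref{y-uvw} is not wasted, however: that sharper pointwise inequality is reused to prove Corollary~\ref{cor:ymax0} (that $\ymax=0$), which your direct calculation does not yield on its own.
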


\begin{prop}\label{prop:h-by-rho} 
For any graph function~$\alpha$, we have $h \le (n-1)\sum_v\rho^R_v$.
\end{prop}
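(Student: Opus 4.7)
The plan is to exhibit a directed path $v^* = u_0 \to u_1 \to \cdots \to u_k$ in $G_\alpha$ of length $k \le n-1$, starting at a minimum-height vertex $v^*$ (with $\y_{v^*} = -h$) and ending at some vertex $u_k$ with $\y_{u_k} = 0$. Telescoping the identity $\alpha^R_{uv} - \alpha_{uv} = \y_u - \y_v$ along this path yields
\[
h \;=\; \y_{u_k} - \y_{v^*} \;=\; \sum_{i=0}^{k-1}\bigl(\alpha_{u_i, u_{i+1}} - \alpha^R_{u_i, u_{i+1}}\bigr),
\]
expressing $h$ as the sum of weight decreases on the path. It will then suffice to bound each summand by $\sum_v \rho^R_v$, which gives $h \le k \sum_v \rho^R_v \le (n-1)\sum_v \rho^R_v$.

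To make this work, I would first establish a minimality property for $r^*$: by Theorem~\ref{thm:newconv}, $r^*$ (equivalently $-\y$) is the componentwise minimum non-negative raising vector such that $\alpha^R$ is raising-balanced, and consequently at every vertex $v$ with $r^*_v > 0$ the raising-balance must be tight, $\alpha^{R,\text{out}}_v = \alpha^{R,\text{in}}_v = \x_v$ (otherwise $r^*_v$ could be shaved). I then build the path greedily: starting at $u_0 = v^*$ and, as long as $r^*_{u_i} > 0$, choosing $u_{i+1}$ to be an out-neighbor such that the edge $(u_i, u_{i+1})$ realises this common tight value, i.e.\ $\alpha^R_{u_i,u_{i+1}} = \x_{u_i}$. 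Along this path the levels $\x_{u_i}$ are non-decreasing, because $\alpha^R_{u_i, u_{i+1}}$ is an incoming edge weight at $u_{i+1}$ and hence $\x_{u_{i+1}} \ge \x_{u_i}$.

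I would then argue that this construction reaches a vertex $u_k$ with $r^*_{u_k} = 0$ within $k \le n-1$ steps. If instead the path stayed in the active set $S = \{v : r^*_v > 0\}$ until it revisited a vertex, it would trace a directed cycle $C \subseteq S$ on which all $\alpha^R$-edge weights equal a common constant $c^*$; a small uniform decrease of $r^*$ on $C$ (extended, if necessary, along exterior edges that already attain $c^*$ until a vertex of $V\setminus S$ is reached) would preserve raising-balance everywhere, contradicting the minimality of $r^*$.

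The hard step will be the per-edge bound $\alpha_{u_i, u_{i+1}} - \x_{u_i} \le \sum_v \rho^R_v$. The naive local bound, combining $\x_{u_i} + \y_{u_i} \le \alpha^{\text{in}}_{u_i}$ (from $\y \le 0$) with $\alpha_{u_i, u_{i+1}} \le \alpha^{\text{out}}_{u_i} \le \alpha^{\text{in}}_{u_i} + \rho^R_{u_i}$, leaves an uncontrolled $h$-sized slack at each step because $-\y_{u_i}$ can itself be as large as $h$; hence a purely local argument fails. To close this gap I would exploit the two-dimensional $(\x,\y)$ representation of Section~\ref{subsec:2drep}: the monotonicity of $\x$ along the constructed path, combined with the tight fixed-point relations $2\y_v = \max_u(\alpha_{uv}+\y_u) - \max_w(\alpha_{vw}-\y_w)$ at the active vertices of $S$, should allow the excess $\alpha_{u_i, u_{i+1}} - \x_{u_i}$ at any step to be charged globally to the raising imbalances $\rho^R_v$ of vertices elsewhere in the graph. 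This is precisely where I expect the ``momentum'' notion alluded to in the introduction to enter, providing the bookkeeping that converts the local tight constraints at the vertices of $S$ into the desired global inequality.
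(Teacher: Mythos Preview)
Your telescoping setup and the minimality claim for $r^*$ are fine, and your path-termination argument is in the right spirit (the paper proves essentially the same fact in Lemma~\ref{lem:technical}, though via a dynamic argument about the raising process rather than a direct minimality perturbation; your perturbation sketch would need care when off-cycle out-edges also attain the value $c^*$).

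The real gap is the per-edge bound $\alpha_{u_i,u_{i+1}} - \x_{u_i} \le \sum_v \rho^R_v$, which you leave open. The paper's route is \emph{not} to fix a single greedy path and bound each edge. Instead it runs an induction on height (Lemma~\ref{lem:newglobalR}(ii)) and, at each inductive step, chooses a vertex $u'$ with the \emph{maximum level} $\x_{u'}$ among all vertices strictly below the current height, and then finds an outgoing edge $(u',v')$ with $\alpha^R_{u'v'}\ge \x_{u'}$ and $\y_{v'}$ at or above the current height (this is the content of Lemma~\ref{lem:technical}). The point of choosing $u'$ by maximum level is that the auxiliary quantity $z_{u'}:=\max\{\x_w:\y_w\le \y_{u'}\}-\x_{u'}$ vanishes there, and the momentum bound of Lemma~\ref{lem:newglobalR}(i), $m_{u'}\le \rho(S_{u'})+z_{u'}$, becomes simply $m_{u'}\le \rho(S_{u'})$. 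Combined with the local edge bound $\y_{v'}-\y_{u'}\le m_{u'}+\rho_{u'}$ (Lemma~\ref{lemma:newlocalR}(ii), with $\x_{u'}-\alpha^R_{u'v'}\le 0$), this gives the clean inductive step.

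Along your greedy path this fails: the vertices $u_i$ need not have maximal level among those below them, so $z_{u_i}$ can be large, and the momentum bound yields only $\y_{u_{i+1}}-\y_{u_i}\le \rho(S_{u_i})+\rho_{u_i}+z_{u_i}$, with no control on $z_{u_i}$. The monotonicity of $\x$ \emph{along your path} does not help, because $z_{u_i}$ sees vertices off the path. So ``plugging in momentum'' will not close your per-edge bound as stated; the edge selection itself has to change, and the paper's choice (max-level vertex below the current height, not max-$\alpha^R$ out-edge from the current vertex) is exactly what is needed.
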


Proposition~\ref{prop:h-lbR} confirms that, to ensure $\varepsilon$-raising-balance, it
suffices to achieve $\Psi\le\varepsilon/2$.  The much more subtle Proposition~\ref{prop:h-by-rho}
implies that~$\Psi$ decreases at each step by an expected $(1-\Omega(\frac{1}{n^3}))$ factor. To see this implication, note that 
the expected decrease in~$\Psi$ in one raising operation is $\frac{1}{2n}\sum_v\rho_v^R$, and apply Proposition~\ref{prop:h-by-rho} and equation~(\ref{psi-h}).

In the next two subsections we prove Propositions~\ref{prop:h-lbR} and~\ref{prop:h-by-rho}
respectively.

%%%%%%%%%%%%%%%%%
\subsection{Proof of Proposition~\ref{prop:h-lbR}}  \label{subsec:prop1}
The proposition will follow almost immediately from the following 
straightforward lemma relating heights to local imbalances. 
Since we are dealing exclusively with the raising phase, in this and the next
two subsections we will write $\rho_v,\rho$ in place of $\rho_v^R,\rho^R$ to simplify notation.

\begin{lemma} \label{y-uvw} Let $\alpha$ be a graph function and $v$ a vertex
with $\rho_v>0$.  Then there exist edges $(u,v)$ and $(v,w)$ in~$G_\alpha$ such that
\begin{equation}\label{eq:y-rhoR}
\y_v + \frac{\rho_v}{2} \le \frac{\y_{u}+\y_w}{2}.
\end{equation}
\end{lemma}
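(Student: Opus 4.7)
The plan is to exploit two ingredients: the defining relation $\alpha^R_{uv} - \alpha_{uv} = y_u - y_v$ from equation~(\ref{ydiff}), and the fact that $\alpha^R$ is raising-balanced, so $(\alpha^R)_v^{\text{out}} \le (\alpha^R)_v^{\text{in}}$ at every vertex. The lemma asserts an inequality between heights and a local imbalance in $\alpha$, and the natural way to reach it is to start from an inequality among edge weights in $\alpha^R$ and translate via equation~(\ref{ydiff}).

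The critical choice is which edges to use. I would pick $w$ so that $(v,w)$ achieves the maximum outgoing weight in the \emph{original} graph function $\alpha$, that is, $\alpha_{vw} = \alpha_v^{\text{out}}$. This choice, rather than the more tempting one of maximizing in $\alpha^R$, is what will allow the quantity $\rho_v$ to appear at the end. Given this $w$, raising-balancedness of $\alpha^R$ guarantees the existence of an in-neighbor $u$ of $v$ with $\alpha^R_{uv} \ge \alpha^R_{vw}$, because $\alpha^R_{vw} \le (\alpha^R)_v^{\text{out}} \le (\alpha^R)_v^{\text{in}} = \max_{u'\to v} \alpha^R_{u'v}$. (Note that $G_{\alpha^R} = G_\alpha$, so $(u,v)$ is an edge of $G_\alpha$, as required.)

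Next I would apply equation~(\ref{ydiff}) to both sides of $\alpha^R_{uv} \ge \alpha^R_{vw}$, yielding
\[
\alpha_{uv} + y_u - y_v \;\ge\; \alpha_{vw} + y_v - y_w,
\]
which rearranges to $y_u + y_w \ge 2 y_v + (\alpha_{vw} - \alpha_{uv})$. Finally, I would bound the bracketed term from below: since $\alpha_{vw} = \alpha_v^{\text{out}}$ by our choice of $w$, and $\alpha_{uv} \le \alpha_v^{\text{in}}$ by the definition of $\alpha_v^{\text{in}}$, we get $\alpha_{vw} - \alpha_{uv} \ge \alpha_v^{\text{out}} - \alpha_v^{\text{in}} = \rho_v$ (using $\rho_v > 0$ so that $\rho_v$ equals this difference, not zero). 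Dividing by $2$ then produces the desired~(\ref{eq:y-rhoR}).

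There is no real obstacle here once one identifies the right edge to pick on the outgoing side; the argument is a short deduction chaining raising-balancedness with the height equation. The only mild subtlety is resisting the temptation to maximize $\alpha^R_{vw'}$ over $w'$, which would give a vacuous bound; choosing $w$ to maximize $\alpha_{vw'}$ in the original function is what couples the inequality to $\rho_v$.
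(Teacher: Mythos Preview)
Your proof is correct and takes essentially the same approach as the paper: choose $w$ to maximize the outgoing $\alpha$-weight at~$v$, choose the incoming vertex (your~$u$, the paper's~$u'$) to maximize the incoming $\alpha^R$-weight, invoke raising-balancedness of~$\alpha^R$ to get $\alpha^R_{uv}\ge\alpha^R_{vw}$, and translate via equation~(\ref{ydiff}). Your presentation is in fact slightly more streamlined---the paper introduces an auxiliary vertex~$u$ maximizing the incoming $\alpha$-weight only to use the bound $\alpha_{u'v}\le\alpha_{uv}$, which you replace by the direct observation $\alpha_{uv}\le\alpha_v^{\text{in}}$---but the argument is the same.
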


\begin{proof}
Let $u,w$ be vertices such that $\alpha_{uv}=\alpha^{\text{in}}_v$ and
$\alpha_{vw}=\alpha^{\text{out}}_v$ (i.e., $(u,v)$ and $(v,w)$ are incoming
and outgoing edges of maximum weight at~$v$).  Note that $\rho_v = \alpha_{vw}-\alpha_{uv}$.

To prove~(\ref{eq:y-rhoR}) we let $u'$ be a vertex such that the edge $(u',v)$ achieves
the maximum in~(\ref{def:xR}), i.e., $\alpha^R_{u'v} = x_v$,
and write
\begin{eqnarray}
   \y_v + \frac{\rho_v}{2} &=& \y_{u'} + \alpha_{u'v} - \alpha^R_{u'v} + \frac{\rho_v}{2}\nonumber\\
                          &=& \y_{u'} + \alpha_{u'v} - \alpha^R_{u'v} + \frac{\alpha_{vw} - \alpha_{uv}}{2} \nonumber\\
                          &\le& \y_{u'} - \alpha^R_{u'v}+ \frac{\alpha_{vw} + \alpha_{u'v}}{2},  \label{eq:tempR}
\end{eqnarray}
where in the first line we used equation~(\ref{ydiff}) and in the last line the fact 
that $\alpha_{u'v}\le\alpha_{uv}$.
Now two further applications of equation~(\ref{ydiff}) give
\begin{eqnarray*}
   \alpha_{vw} &=& \y_w - \y_v + \alpha^R_{vw}; \\
   \alpha_{u'v} &=& \y_v - \y_{u'} + \alpha^R_{u'v}.
\end{eqnarray*}
Plugging these into~(\ref{eq:tempR}) yields $$
   \y_v + \frac{\rho_v}{2} \le \y_{u'} - \alpha^R_{u'v} + \frac{\y_w - \y_{u'} + \alpha^R_{vw} + \alpha^R_{u'v}}{2} = \frac{\y_{u'}+\y_w}{2} + \frac{\alpha^R_{vw} - \alpha^R_{u'v}}{2} \le \frac{\y_{u'}+\y_w}{2}, $$
where the last inequality follows from the facts that $\alpha^R$ is raising balanced and 
$\alpha^R_{u'v}$ is maximal among incoming edges at~$u$.  This completes the proof. 
\end{proof}

Before proving Proposition~\ref{prop:h-lbR}, we pause to observe a consequence
of Lemma~\ref{y-uvw}:
\begin{corollary} \label{cor:ymax0}
$y_{\max}=0$.
\end{corollary}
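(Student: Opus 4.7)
\textit{Proof proposal.} Since $r^*_v \ge 0$ for every $v$, we have $\y_v = -r^*_v \le 0$, giving $\ymax \le 0$ immediately. The remaining task is the reverse inequality $\ymax \ge 0$; my plan is to track how the maximum height evolves along a fair raising sequence, arguing that it is in fact time-invariant while simultaneously tending to~$0$ in the limit.

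Fix a fair raising sequence with raising vectors $r(t)\to r^*$ (whose existence and uniqueness are guaranteed by Lemma~\ref{lem:twophaseconv}) and, for $t\ge 0$, introduce the time-$t$ heights $\y_v(t) := \y_v + r_v(t)$. A short check shows that $\y_v(t)$ is precisely the height function of~$\alpha(t)$: the raising-balanced limit of $\alpha(t)$ remains $\alpha^R$ (because raising preserves equivalence classes and the limit is unique), and the future cumulative raising from time~$t$ is $r^*-r(t)$. Consequently Lemma~\ref{y-uvw} is applicable to $\alpha(t)$ at every step, with the heights $\y_v$ replaced by $\y_v(t)$.

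The central step is to show that $Y(t):=\max_v \y_v(t)$ is constant in~$t$. That $Y(t)$ is non-decreasing is immediate from the coordinatewise monotonicity of $r(t)$. For the reverse direction, consider the step at time~$t$ performed at vertex~$v$: if $\rho_v(t)=0$ the operation is trivial, while if $\rho_v(t)>0$ then Lemma~\ref{y-uvw} applied to $\alpha(t)$ produces $u,w$ with
\[
 \y_v(t) + \rho_v(t)/2 \;\le\; \tfrac{1}{2}\bigl(\y_u(t)+\y_w(t)\bigr) \;\le\; Y(t).
\]
Since $\y_v(t+1)=\y_v(t)+\rho_v(t)/2$ and every other coordinate of $\y(\cdot)$ is unchanged at this step, we conclude $Y(t+1)\le Y(t)$.

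With $Y(t)$ constant, the proof finishes by passing to the limit: for each~$v$, $\y_v(t)\to \y_v+r^*_v=0$, so $Y(t)\to 0$; combined with $Y(0)=\ymax$ this forces $\ymax=0$. The main (mild) obstacle is justifying that $\y_v(t)$ really is the height function of the intermediate $\alpha(t)$, so that Lemma~\ref{y-uvw} is legitimately applicable at each step; this rests on the uniqueness of the limiting raising vector established in Lemma~\ref{lem:twophaseconv}, after which the lemma does all of the real work.
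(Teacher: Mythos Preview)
Your proof is correct and follows essentially the same approach as the paper: both apply Lemma~\ref{y-uvw} at each step to show that a raising operation can never push a height above the current maximum, so $Y(t)=\max_v \y_v(t)$ is non-increasing; combined with coordinatewise monotonicity of the heights this pins $Y(t)$ at its limiting value~$0$. The paper's version is a bit terser---it fixes a single vertex $u$ with $\y_u=\ymax$ and observes directly that $u$ can never rise (hence $r^*_u=0$)---but the mathematical content is identical.
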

\begin{proof} A raising operation applied at some vertex $v$ increases $y_v$ to 
$\y_v + \frac{\rho_v}{2}$. By Lemma~\ref{y-uvw} this is $ \le \frac{\y_{u}+\y_w}{2} \leq 
y_{\max}$. Therefore $y_{\max}$ can never increase during the raising phase. Choose a vertex $u$ so that $y_u=y_{\max}$ initially. Then $u$ can never rise, and so $r^*_u=0$. Thus $y_{\max}=y_u=-r^*_u=0$.
\end{proof}

We conclude this subsection by proving Proposition~\ref{prop:h-lbR}.
\par\medskip
\textit{Proof of Proposition~\ref{prop:h-lbR}.}
Let $v$ be any vertex with raising imbalance $\rho_v > 0$.
By Lemma~\ref{y-uvw} we have
$$ \frac{\rho_v}{2} \le  \frac{\y_{u} + \y_w}{2} - \y_v \le y_{\max}-y_{\min} = h.$$
Together with the fact that, by definition, $h\ge 0$, this completes the proof.\qed

%%%%%%%%%%%%%%%%%
\subsection{Proof of Proposition~\ref{prop:h-by-rho}}  \label{subsec:prop2}
The reverse inequality, Proposition~\ref{prop:h-by-rho}, is considerably more delicate. 
It will make crucial use of the two-dimensional representation introduced earlier,
as well as one additional concept which we call ``momentum," whose role will emerge
shortly.  The {\it momentum\/} of a vertex~$v$ is defined by
\be \label{def:momentumR} m_v := \max_{u:u \to v} \{\alpha_{uv}-\x_v\}, \ee
where $\x_v$ is the level of~$v$ as defined above.

One key property of momentum is that it can be upper bounded at a vertex in terms of the sum of the momentum and the local imbalance at the predecessors of the vertex in the graph. (The form of this bound is complicated by a dependence on the levels of the vertices; see part~(i) of Lemma~\ref{lemma:newlocalR}.) The second key property of momentum is that, along any edge, height can increase only at a rate bounded by momentum and local imbalance. (Again there are additional terms complicating the dependence; see part~(ii) of Lemma~\ref{lemma:newlocalR}.) 
Putting these two properties
together will show that, if the height difference between some two vertices in the graph is substantial,
then the aggregate local imbalance $\rhoone$ must be large, which is exactly the
content of Proposition~\ref{prop:h-by-rho}.

\begin{lemma}\label{lemma:newlocalR}
The momentum satisfies the following inequalities:
\begin{itemize}
\item[(i)]
For any vertex~$v$, \; $
   m_v \le \max\big\{0,\max_{u: u \to v, \y_u<\y_v} (m_u +\rho_u + \x_u - \x_v) \big\}$.
\vskip0.1in   
\item[(ii)]
For any edge $(u,v)$ in~$G_\alpha$, \; $\y_v \le \y_u + m_u + \rho_u + \x_u - \alpha^R_{uv}$.
\end{itemize}
\end{lemma}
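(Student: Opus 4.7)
The plan is to reduce everything to two elementary identities together with the rearrangement of~(\ref{ydiff}), and then handle part~(i) by a single case split. The two elementary identities are $\alphavin = \x_v + m_v$ (immediate from the definition of $m_v$) and $\alphavout \le \alphavin + \rho_v$ (immediate from $\rho_v = \rho_v^R = \max\{0,\alphavout-\alphavin\}$). The third ingredient is the identity $\alpha_{uv} = \alpha^R_{uv} + \y_v - \y_u$, i.e.\ $\y_v - \y_u = \alpha_{uv} - \alpha^R_{uv}$, which is just~(\ref{ydiff}) rearranged.

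\emph{Part (ii).} Since $(u,v)$ is an edge of $G_\alpha$, chaining the two elementary identities at $u$ gives $\alpha_{uv} \le \alpha_u^{\text{out}} \le \alpha_u^{\text{in}} + \rho_u = \x_u + m_u + \rho_u$. Substituting into $\y_v - \y_u = \alpha_{uv} - \alpha^R_{uv}$ yields the desired bound $\y_v \le \y_u + m_u + \rho_u + \x_u - \alpha^R_{uv}$.

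\emph{Part (i).} Pick $u^*$ achieving the maximum in the definition of $m_v$, so that $m_v = \alpha_{u^*v} - \x_v$. I split on the sign of $\y_{u^*} - \y_v$. If $\y_{u^*} \ge \y_v$, the identity gives $\alpha_{u^*v} = \alpha^R_{u^*v} + \y_v - \y_{u^*} \le \alpha^R_{u^*v} \le \x_v$, where the last step uses $\x_v = \max_{u: u \to v} \alpha^R_{uv}$; hence $m_v \le 0$ and the first term of the outer maximum suffices. If instead $\y_{u^*} < \y_v$, then applying the chain from part~(ii) at~$u^*$ gives $\alpha_{u^*v} \le \x_{u^*} + m_{u^*} + \rho_{u^*}$, so $m_v = \alpha_{u^*v} - \x_v \le m_{u^*} + \rho_{u^*} + \x_{u^*} - \x_v$; because $u^* \to v$ and $\y_{u^*} < \y_v$, this quantity is one of those appearing in the inner maximum, and the claimed bound follows.

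\textbf{Main obstacle.} There is no serious calculation here, so the only thing to highlight is a conceptual one: the restriction $\y_u < \y_v$ in the inner maximum of~(i) is not cosmetic but forced by the argument. When $\y_u \ge \y_v$, the identity shows $\alpha_{uv} \le \alpha^R_{uv} \le \x_v$, so such an edge contributes nothing positive to $m_v$; only edges along which height strictly drops can transmit momentum. This geometric fact is exactly what lets later arguments induct on height and bound the height spread $h$ by the aggregate imbalance $\rhoone$ as in Proposition~\ref{prop:h-by-rho}.
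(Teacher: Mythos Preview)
Your proof is correct and follows essentially the same approach as the paper: both use the basic chain $\alpha_{uv}\le\alpha_u^{\text{out}}\le\alpha_u^{\text{in}}+\rho_u=m_u+\rho_u+x_u$ together with~(\ref{ydiff}), and both dispose of the case $\y_u\ge\y_v$ via $\alpha_{uv}=\alpha^R_{uv}+\y_v-\y_u\le\alpha^R_{uv}\le x_v$. The only cosmetic difference is that you fix a single maximizer~$u^*$ and case-split on it, whereas the paper first bounds by the unrestricted maximum and then argues that the $\y_u\ge\y_v$ terms may be dropped; your presentation is if anything slightly cleaner.
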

\begin{proof}
Both parts rely on the following elementary observation.  For any edge
$(u,v)$, by the definition of $\rho_u$ we have
\begin{equation}\label{eqn:basicR}
   \alpha_{uv} \le \max_{u': u' \to u} \alpha_{u'u} + \rho_u
   = m_u + \rho_u + \x_u,
\end{equation}
where we have used the definition of momentum~(\ref{def:momentumR}).

For part~(i), we thus have $$
   m_v = \max_{u: u \to v}\{\alpha_{uv} - \x_v\} \le\max_{u: u \to v}\{m_u + \rho_u + \x_u -\x_v\}.  $$

Now note that if
$\y_u\ge \y_v$ we have, recalling equations~(\ref{ydiff},~\ref{def:xR}), \be \label{u-above-vR}
   \alpha_{uv} - \x_v = \y_v - \y_u + \alpha^R_{uv} - \x_v \le 0. \ee
This justifies the restriction on the right-hand side of (i) to maximization over 
 $u$ s.t.~$\y_u<\y_v$.
This completes the proof of~(i).

For part~(ii), apply equations~(\ref{ydiff},~\ref{eqn:basicR}) to deduce $$
   \y_v = \y_u + \alpha_{uv} - \alpha^R_{uv} \le \y_u + m_u + \rho_u + \x_u - \alpha^R_{uv}, $$
as required.   
\end{proof}

To complete the proof of Proposition~\ref{prop:h-by-rho}, we need to
translate Lemma~\ref{lemma:newlocalR}, which gives local bounds on 
the increase of momentum and height between adjacent vertices,
into global bounds in~$G_\alpha$, as specified in Lemma~\ref{lem:newglobalR} below.  
The key to this translation will be to apply part~(ii) of Lemma~\ref{lemma:newlocalR} to a
carefully chosen sequence of edges, for each of which the extra term
$x_u-\alpha^R_{uv}$ in the lemma is non-positive and which together span
the heights in the graph; these edges will be identified in Lemma~\ref{lem:technical} below.

For a vertex~$v$, let $S_v$ denote the
set of vertices whose height is strictly smaller than~$v$, i.e., $S_v = \{u: \y_u < \y_v\}$.
Also, let $\rho(S_v) = \sum_{u\in S_v} \rho_u$.
Finally, define
 \be z_v := \max\{\x_u: \y_u \leq \y_v\}-\x_v. \label{def-z} \ee
This definition has the consequences that $z_v \geq 0$ and
 \be \label{xz-monotoneR} [y_u \leq y_v] \Rightarrow [x_u+z_u\leq x_v+z_v]. \ee

We are now ready to state our global bounds. 
\begin{lemma}\label{lem:newglobalR}
For any vertex~$v$, the following two bounds hold:
\begin{enumerate}
\item[(i)] $m_v \le \rho(S_v) + \z_v$\/{\rm ;}
\vskip0.1in
\item[(ii)] $\y_v - y_{\min} \le |S_v| \cdot \rho(S_v)$.
\end{enumerate}
\end{lemma}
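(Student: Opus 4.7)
My plan is to prove part (ii) by strong induction on $|S_v|$. The base case $|S_v|=0$ is immediate: no vertex has strictly smaller height than $v$, so $y_v=y_{\min}$ and the bound $0\le 0$ holds trivially.

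For the inductive step, the guiding idea is a one-step descent bound: find a predecessor $u\in S_v$ of $v$ for which $y_v-y_u$ is controlled by $\rho(S_v)$, then invoke the induction hypothesis at $u$. I will take $u^*$ to be a predecessor of $v$ with $\alpha^R_{u^*v}=x_v$ (such a $u^*$ exists by the definition of $x_v$) and apply Lemma~\ref{lemma:newlocalR}(ii) to the edge $(u^*,v)$, substituting the already-proven bound $m_{u^*}\le\rho(S_{u^*})+z_{u^*}$ from~(i). The monotonicity property~(\ref{xz-monotoneR}) applied via $y_{u^*}\le y_v$ gives $x_{u^*}+z_{u^*}\le x_v+z_v$, and combined with the containment $S_{u^*}\cup\{u^*\}\subseteq S_v$ (valid when $y_{u^*}<y_v$), the chain of inequalities collapses to the step bound
\[ y_v-y_{u^*}\le\rho(S_v)+z_v. \]
Adding this to the inductive bound $y_{u^*}-y_{\min}\le(|S_v|-1)\rho(S_v)$ yields $y_v-y_{\min}\le|S_v|\rho(S_v)+z_v$.

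The remaining technical task is to eliminate the extraneous $z_v$ term and to handle the auxiliary case where the natural predecessor $u^*$ fails to descend ($y_{u^*}\ge y_v$). For the $z_v$ issue, I plan to route through a peak-$x$ vertex: letting $w$ attain $x_w=X_v$ on the set $\{u:y_u\le y_v\}$, one automatically has $z_w=0$. In the case $y_w=y_v$, simply running the step-plus-induction argument at $w$ in place of $v$ yields a clean bound (no $z$ term) that transfers directly since $S_w=S_v$ and $y_w=y_v$. In the case $y_w<y_v$, the vertex $w$ lies in $S_v$ with $|S_w|<|S_v|$; the induction hypothesis at $w$ combined with a transfer step (again exploiting~(\ref{xz-monotoneR})) recovers the desired bound. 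The non-descending case $y_{u^*}\ge y_v$ is handled by falling back to the descending predecessor construction already used in the proof of~(i), which guarantees a suitable $u\in S_v$ whenever the relevant maximum in Lemma~\ref{lemma:newlocalR}(i) is positive; the degenerate residual situation is covered by Lemma~\ref{y-uvw}.

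The main obstacle is the clean elimination of the $z_v$ contribution, and in particular the transfer step from the peak-$x$ vertex $w$ back to $v$ when $y_w<y_v$. This step requires leveraging both the monotonicity~(\ref{xz-monotoneR}) and the raising-balanced structure of $\alpha^R$ (namely, $(\alpha^R)_u^{\text{out}}\le x_u$ at every vertex), and it is where the most delicate bookkeeping in the argument will be concentrated.
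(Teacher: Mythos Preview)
Your one-step descent bound $y_v-y_{u^*}\le\rho(S_v)+z_v$ is correct, and the idea of routing through a peak-$x$ vertex $w$ with $z_w=0$ is natural. The genuine gap is the ``transfer step'' you defer to the end: neither of your proposed fallbacks actually closes it.

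Concretely, take the case $y_w<y_v$. You have, by induction, $y_w-y_{\min}\le|S_w|\rho(S_w)\le(|S_v|-1)\rho(S_v)$, so to finish you need $y_v-y_w\le\rho(S_v)$. But nothing in your toolkit produces this. Applying Lemma~\ref{lemma:newlocalR}(ii) along any edge $(w,v')$ only bounds $y_{v'}$ from above by $y_w+m_w+\rho_w+x_w-\alpha^R_{wv'}$; for this to give a useful bound on $y_v$ you need two things simultaneously: (a) $y_{v'}\ge y_v$, and (b) $\alpha^R_{wv'}\ge x_w$. Nothing in the monotonicity property~(\ref{xz-monotoneR}) or in the raising-balanced inequality $(\alpha^R)_u^{\text{out}}\le x_u$ furnishes such an edge. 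The same issue bites in your Case~A subcase $y_{u^{**}}\ge y_w$: the ``descending predecessor construction'' from part~(i) only guarantees a $u\in S_w$ with $m_u+\rho_u+x_u-x_w\ge m_w$, which controls $m_w$, not $y_w-y_u$ (since for that predecessor $\alpha^R_{uw}$ may be strictly below $x_w$). Lemma~\ref{y-uvw} gives $y_v+\rho_v/2\le(y_u+y_{w'})/2\le y_{\max}$, which is of no help for a lower bound on $y_v-y_{\min}$.

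The paper confronts exactly this obstruction and resolves it with a separate structural lemma (Lemma~\ref{lem:technical}): taking $U=\{u:y_u<y_v,\ x_u=x_U\}$ with $x_U=\max\{x_u:y_u<y_v\}$, it proves that some $u'\in U$ has an outgoing edge $(u',v')$ with $\alpha^R_{u'v'}\ge x_U$ and $y_{v'}\ge y_v$. The proof is not a static bookkeeping argument: it builds the auxiliary graph $G'=\{(u,w):u\in U,\ \alpha^R_{uw}\ge x_U\}$, assumes for contradiction that $G'$ has a sink scc $U'\subseteq U$, and derives a contradiction by analyzing the \emph{dynamics} of raising operations---specifically, looking at the first moment some vertex of $U'$ rises above a carefully chosen threshold $-\delta$. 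This dynamical ingredient is what you are missing; your outline implicitly assumes the conclusion of Lemma~\ref{lem:technical} without supplying its proof.
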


\begin{proof}
Both parts~(i) and~(ii) are proved by induction on vertices~$v$ in order of height.  
We begin with part~(i), which will be used in the proof of part~(ii).

{\it Part~(i)}.  We use induction on the height~$\y_v$ of~$v$.  If $\y_v=\min_u y_u$, the statement is true since $m_v \leq 0$ (this follows from
part~(i) of Lemma~\ref{lemma:newlocalR})
and the right-hand side is non-negative.  Assuming $\y_v>\min_u y_u$, and beginning
with part~(i) of Lemma~\ref{lemma:newlocalR}, we may write
\begin{eqnarray*}
   m_v &\le& \max\big\{0,\max_{u: u \to v, \y_u<\y_v}\{ m_u +\rho_u + \x_u - \x_v\}\big\}\\
            &\le& \max\big\{0,\max_{u: u \to v, \y_u<\y_v}\{ \rho(S_u) +\z_u + \rho_u + \x_u - \x_v\}\big\}\\
            &\le& \max\big\{0,\max_{u: u \to v, \y_u<\y_v}\{ \rho(S_v)  +\z_u  + \x_u - \x_v\}\big\},
\end{eqnarray*}
where in the second line we applied induction to bound~$m_u$ since $\y_u < \y_v$,
and in the third line we used the fact that $u\in S_v\setminus S_u$ to deduce that
$\rho(S_u) + \rho_u\le\rho(S_v)$.

To complete the proof we need only apply Eqn.~(\ref{xz-monotoneR}) to conclude that for all $u$ in the last maximization, $x_u+z_u \leq x_v+z_v$.

{\it Part~(ii)}.  Again we use induction on the height of~$v$.  The base case
is $\y_v=y_{\min}$, in which case the statement holds trivially.

Now let $v$ be any vertex with $y_{\min}<\y_v$, and suppose part~(ii) is established for all $u$ with $y_u<\y_v$. Let
$U$ be the subset of $\{u: y_u<y_v\}$ having $x_u=x_U$, where $x_U=\max\{x_u: y_u<y_v\}$. Of course $U\neq \emptyset$, and $z_u=0$ for all $u \in U$.  We require the following additional fact:

\begin{lemma}\label{lem:technical} 
There is a vertex $u' \in U$ having an edge $(u',v')$ where $v' \notin U$, 
$\alpha^R_{u'v'}\geq x_U$, 
 $y_{v'}\geq y_v$.
\end{lemma}

We defer the proof of Lemma~\ref{lem:technical} to the end of the subsection and
continue with the proof of Lemma~\ref{lem:newglobalR}(ii). 
Applying part~(ii) of Lemma~\ref{lemma:newlocalR} to the edge $(u',v')$ provided by 
Lemma~\ref{lem:technical}, we have
\begin{eqnarray*}
   \y_v\,\le\, \y_{v'} &\le& \y_{u'} + m_{u'} + \rho_{u'} + \x_{u'} - \alpha^R_{u'v'}\\
                          &\leq & \y_{u'} + m_{u'} + \rho_{u'}\\
                          &\le& \y_{u'} + \rho(S_{u'}) + \z_{u'} + \rho_{u'}\\
 & = &  \y_{u'} + \rho(S_{u'}) + \rho_{u'}\\
                          &\le& y_{\min}+|S_{u'}|\rho(S_{u'}) + \rho(S_{u'}) + \rho_{u'}\\
                          &\le& y_{\min}+|S_{v}|\rho(S_{v}).
\end{eqnarray*}
In the second line here we have used the guarantee of Lemma~\ref{lem:technical} that 
$\alpha^R_{u'v'}\geq x_U$;
in the third line we have used part~(i) of the
current lemma;
in the fourth line we have used the fact noted earlier that $z_{u'}=0$ for any $u'\in U$;
in the fifth line we have used the inductive
hypothesis applied to~$\y_{u'}$ (which is strictly less than~$\y_v$); and in the last line
we have used the fact that $u'\in S_{v}\setminus S_{u'}$.
This completes the inductive proof of part~(ii) of the lemma.
\end{proof}

Our main goal in this subsection, Proposition~\ref{prop:h-by-rho}, now follows trivially from
Lemma~\ref{lem:newglobalR}.
\par\medskip
\textit{Proof of Proposition~\ref{prop:h-by-rho}.}
By part~(ii) of Lemma~\ref{lem:newglobalR} we have, for any vertex~$v$, 
$\y_v-\y_{\min} \le (n-1)\sum_u\rho_u$.\qed
\par\medskip
It remains only for us to supply the missing proof of Lemma~\ref{lem:technical}.
\par\medskip
\textit{Proof of Lemma~\ref{lem:technical}.}
The proof will follow from two claims.

\textit{Claim (i):} Every vertex $u\in U$ has an outgoing edge $(u,w)$ with $\alpha^R_{uw}= x_U$. 

To state the second claim, define the graph $G'$ formed by the set of edges 
$\{(u,w): u\in U, \alpha^R_{uw}\ge x_U\}$.

\textit{Claim (ii):} There is an edge $(u,w) \in G'$ such that $w \notin U$.

Claim (ii) implies the lemma because $(u,w) \in G'$ implies $\alpha^R_{uw}\ge x_U$; this last implies also that $x_w \geq x_U$; and then $w$ cannot have $y_w<y_v$ as it would then be in $U$.

\textit{Proof of Claim (i):} Since $u\in U$, we know that $\y_u<0$ and hence some raising must
eventually be performed at~$u$, i.e., eventually $\rho^R_u>0$ and then $\rho^L_u=0$.   
But once this happens, the
same argument as in the proof of Proposition~\ref{prop:twophasebasic}
shows that $\rho^L_u=0$ at all future times, and hence in~$\alpha^R$ (at the conclusion of the raising phase) we must
have $\rho^R_u=\rho^L_u=0$.  Thus if $(u,w)$ is a maximum outgoing edge at~$u$ in~$\alpha^R$,
we have $\alpha^R_{uw}=\max_v \alpha^R_{vu} = x_U$.\qed

\textit{Proof of Claim (ii):} By Claim~(i), every $u\in U$ has an outgoing edge in $G'$. 
Suppose for contradiction that $w\in U$ for all edges of $G'$.
Then in $G'$ there is a sink scc on vertices $U' \subseteq U$. 
($U'$~may be as small as a single vertex with a self-loop.) 

Let $\delta_0 = -\max_{u'\in U'} \y_{u'}$.  Note that $\delta_0 > 0$.  
Also, let $\delta_1= \min\{\alpha^R_{rr'}-\alpha^R_{ss'}: \alpha^R_{rr'}-\alpha^R_{ss'}>0\}$;
if this set is empty let $\delta_1=\infty$.  Finally, select $0<\delta<\min\{\delta_0,\delta_1\}$.

Now fix any infinite fair sequence of raising operations, and consider the first operation
in the sequence that increases any $\y_{u}$, $u\in U'$, to a value larger than $-\delta$. 
(This is well defined as all $\y_{u}, u \in U'$ increase monotonically to zero.) 
Let $u'$ be the vertex raised by this operation, and let $\alpha$ be
the graph function immediately after the operation and $\y$ the 
height function corresponding to~$\alpha$.

By construction there is an edge $(t,u')$ in $G'$, with $t \in U'$. Applying 
equation~(\ref{ydiff}) to the edge $(t,u')$, we see that 
\begin{equation}\label{contra0}
  \alpha_{tu'} = \alpha^R_{tu'} + y_{u'} - y_t \ge x_U + y_{u'} - y_t  \geq x_U,
\end{equation}
where the first inequality is because the edge is in $G'$, and the second inequality
is by choice of~$u'$.  (Note that equality can hold here only if $t=u'$.)

Now let $(u',w)$ be a maximum (in $\alpha$) outgoing edge at~$u'$. 
Since a nontrivial raising operation
has just been performed at~$u'$, it must be the case that $w\ne u'$. (Raising preserves the 
order of weights among outgoing edges in~$\alpha$. If a self-loop is a maximal outgoing edge from $u'$ in $\alpha$ then it was also maximal before raising. But then there could have been no nontrivial raising at~$u'$.) Also, since we have just raised at $u'$, $\rho^L_{u'}=0$, so $\alpha_{u'w}\ge\alpha_{tu'}$. Combining with~(\ref{contra0}) we have
\begin{equation}\label{contra1}
       \alpha_{u'w}\ge x_U.
\end{equation}     

Applying equation~(\ref{ydiff}) to the edge $(u',w)$ we get 
$$ \alpha^R_{u'w} = \alpha_{u'w} + \y_{u'} - \y_w \geq \alpha_{u'w} + \y_{u'} \geq 
\alpha_{u'w} - \delta \geq
 x_U - \delta,  $$ 
where the first inequality is because heights are nonpositive, 
the second is by the choice of $u'$, and the third is by equation~(\ref{contra1}).
But since $\delta < \delta_1$ this in fact implies that $\alpha^R_{u'w} \geq x_U$.
Hence the edge $(u',w)$ belongs to~$G'$, and since $U'$ is a sink scc
it must be the case that $w\in U'$.  But this implies that  $\alpha^R_{u'w}\le x_w=x_U$, and therefore in fact that  $\alpha^R_{u'w} = \x_U$; and also, by choice of~$u'$, that
$\y_w<\y_{u'}$. Hence $$
   \alpha_{u'w} = \alpha^R_{u'w} -\y_{u'} + \y_w = x_U  -\y_{u'} + \y_w
 < x_U, $$
which in light of~(\ref{contra1}) gives us the desired contradiction.
This completes the proof of Claim~(ii) and therefore the lemma. \qed

%%%%%%%%%%%%%%%%%
\subsection{Proof of Theorem~\ref{thm:main}}  \label{subsec:final}
We are now finally in a position to prove our main result, Theorem~\ref{thm:main}.
\par\medskip
\textit{Proof of Theorem~\ref{thm:main}.}
It suffices to show that, with high probability, $O(n^3 \log (\rho n/\ep))$ raising operations
suffice to achieve an $\varepsilon$-raising-balanced graph function.  As we observed
at the beginning of the section, by symmetry the same holds for
the lowering phase, and hence the output of the algorithm will be $\varepsilon$-balanced
w.h.p.

To analyze the raising phase, we index all quantities by~$t$, the number of 
raising operations that have so far been performed.  Thus in particular 
$\y_v(t)$ is the height of vertex~$v$ after $t$ raising operations.
We also introduce the potential function $\Psi(t) = -\sum_v \y_v(t)$.
Note that $\Psi(t)\ge 0$ and that $\Psi(t)$ decreases monotonically to~0 as $t\to\infty$.  
Moreover, by Proposition~\ref{prop:h-lbR} we have $$
   \Psi(t) = -\sum_v \y_v(t) \ge h(t) \ge \rho^R(t)/2.  $$
Hence in order to achieve raising imbalance at most~$\varepsilon$ after $t$ raising
steps it suffices to ensure that $\Psi(t)\le \varepsilon/2$.

Since a raising operation at vertex~$v$ increases $\y_v$ by $\rho^R_v/2$, the
expected decrease in $\Psi(t)$ in one raising operation is 
\begin{equation*}
    {\rm E}[\Psi(t)-\Psi(t+1)] = \frac{1}{2n}\sum_v \rho^R_v(t)
                                                \ge \frac{h(t)}{2n(n-1)} 
                                                = \frac{-\y_{\min}(t)}{2n(n-1)}
                                                \ge \frac{1}{2n^3}\Psi(t),
\end{equation*}
where in the first inequality we have used Proposition~\ref{prop:h-by-rho}, in the
next step Corollary~\ref{cor:ymax0}, and in the last inequality the fact that $\y_{\min}$ 
is less than the average of the $\y_v$.
Iterating for $t$ steps yields $$
    {\rm E}[\Psi(t)] \le \Bigl(1-\frac{1}{2n^3}\Bigr)^t \Psi(0).  $$
To get an upper bound on $\Psi(0)$, we observe that $$
   \Psi(0) = -\sum_v \y_v(0) \le -n\y_{\min}(0) = nh(0) \le n(n-1)\sum_v\rho^R_v(0) \le n^3\rho^R(0) \le n^3\rho,  $$
where we have again used Corollary~\ref{cor:ymax0} and Proposition~\ref{prop:h-by-rho}.  (Recall
that $\rho=\max\{\rho^R(0),\rho^L(0)\}$ is the imbalance of the original matrix.)
Hence, for any $\delta\in(0,1)$, after $t=6n^3 \ln (2\rho n/\ep\delta)$ raising operations we have $$
    {\rm E}[\Psi(t)] \le \varepsilon\delta/2.  $$
By Markov's inequality this implies that $\Psi(t)\le\ep/2$ with probability at least $1-\delta$.
To obtain the result w.h.p., it suffices to take $\delta=o(1)$.
\qed

%%%%%%%%%%%%%%%%%%%%%%%
\section{Open problems}
We close the paper with some remarks and open problems.

{\bf 1.}\ \ The introduction of raising and lowering phases into the Osborne-Parlett-Reinsch
algorithm is trivial from an implementation perspective (the effect being simply to censor
some of the balancing operations), but seemingly crucial to our analysis.  If balancing
operations of both types are interleaved, we no longer have a reliable measure of
progress.  It would be interesting to modify the analysis so as to handle
an arbitrary sequence of balancing operations, and on the other hand to investigate
whether the introduction of phases actually improves the rate of convergence (both in
theory and in practice).  Similarly, it would be interesting to investigate the effect of
choosing a random sequence of indices at which to perform balancing, rather than
a deterministic sequence as in the original algorithm.

{\bf 2.}\ \ Recall that, without raising and lowering phases, the balanced matrix to which
the algorithm converges is not unique (except, by definition, in the UB case).  It would
be interesting to investigate the geometric structure of the set of fixed points, which apparently
can be rather complicated.

{\bf 3.}\ \ As we have seen, our $\Otilde(n^3)$ bound on the worst case convergence time is
essentially optimal.  While this goes some way towards explaining
the excellent performance of the algorithm in practice, more work needs to be done to explain its empirical dominance over algorithms with faster worst-case asymptotic running times.
In particular, can one identify features of ``typical'' matrices that lead to much
faster convergence?

{\bf 4.}\ \ We have analyzed only the $L_\infty$ variant of the Osborne-Parlett-Reinsch
algorithm.  It would be interesting to analyze the algorithm for other norms $L_p$, in
particular $L_2$ and $L_1$.  (In the latter case, the problem corresponds to finding
a circulation---in the sense of network flows---that is equivalent to the initial assignment
of flows to the edges of the graph.)

\section{Acknowledgements}
We thank Tzu-Yi Chen and Jim Demmel for introducing us to this problem and for several helpful
pointers, Martin Dyer for useful discussions, and Yuval Rabani for help in the 
early stages of this work which led to the proof of Theorem~\ref{thm:characterizeUB}.  
We also thank anonymous referees of an earlier version for suggestions that improved
the presentation of the paper. 

\bibliography{refs} \bibliographystyle{plain}
\appendix

%%%%%%%%%%
%%%%% yi-reproof
%%%%%%%%%%
\section{Proof of convergence} \label{app:yi-reproof}
In this section we present an alternative proof of convergence of the original
Osborne-Parlett-Reinsch $L_\infty$ balancing algorithm.  As remarked earlier
this was first proved by
Chen~\cite{Chen-MS98} using a compactness argument.  In the interests of making the
present paper self-contained, we give here an alternative proof that is more combinatorial
in nature.  For this proof only, we explicitly allow more general sequences of balancing
operations, subject only to the ``fairness" constraint that every index~$i$ appears infinitely often.
(Clearly, the random sequence discussed in the main text has this property with probability~1.)
Note, however, that this proof does not apply to the two-phase version of the algorithm
which we analyze in this paper because the sequence of balancing operations in that
version is not fair.  We prove convergence of the two-phase version in Section~\ref{sec:convergence}.

\begin{theorem}\label{thm:chen}
Consider the Osborne-Parlett-Reinsch $L_\infty$ balancing algorithm applied
to any irreducible (real or complex) input matrix~$A$, and assume that balancing
operations are performed infinitely often at all indices~$i$.  Then the algorithm
converges, i.e., if $A^{(t)}$ is the sequence of matrices after each step of the
algorithm, then there exists a balanced matrix~$B$ equivalent to~$A$ such that
$A^{(t)}\to B$ as $t\to\infty$.
\end{theorem}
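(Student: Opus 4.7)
The plan is to exhibit a Lyapunov-type potential function whose per-step decrements are summable, and to deduce from this that the sequence of graph functions $\alpha(t)$ is Cauchy in the sup norm, hence convergent to a balanced equivalent $\alpha^*$. First I would establish that $\alpha(t)$ is uniformly bounded. The global maximum $M(t)=\max_{u,v}\alpha_{uv}(t)$ is non-increasing: a balancing operation at $v$ with parameter $c=(\alphavout-\alphavin)/2$ replaces the extremal row-$v$ and column-$v$ entries by their common average $(\alphavin+\alphavout)/2$, and all other row-$v$ and column-$v$ entries end up bounded above by this same average; entries outside the $v$-neighborhood are untouched, so no entry can exceed $M(t)$. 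For the lower bound I would invoke irreducibility: every edge $(u,v)$ lies on a directed cycle of length at most $n$ in $G_\alpha$, and the sum of edge weights around any cycle is invariant under any balancing (the adjustments $\pm c$ at an internal cycle vertex cancel along its unique incoming and outgoing cycle edges). Combined with $\alpha_{u'v'}(t)\le M(0)$ for every edge, this yields $\alpha_{uv}(t)\ge C_0-(n-1)M(0)$, where $C_0$ is the fixed cycle sum; this is the content I would record as Lemma~\ref{lem:bdedbelow}.

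Next I would introduce the potential $\Phi(\alpha)=\sum_{(u,v)\in E,\,u\neq v} N_{uv}!\cdot\alpha_{uv}$, where $N_{uv}$ is the rank of the edge $(u,v)$ from the bottom (the number of edges $(u',v')$ with $\alpha_{u'v'}\le\alpha_{uv}$). The central claim is that any nontrivial balancing operation at $v$ strictly decreases $\Phi$ by at least $\rho_v/2$, where $\rho_v=|\alphavout-\alphavin|$. I would prove this by a continuous-particle-motion argument: interpolate the balancing linearly over a unit time interval, during which the incident particles move monotonically. In the raising case ($\alphavout>\alphavin$), the outgoing particle of maximum weight $(v,\bar w)$ carries the largest factorial coefficient among moving particles and moves leftward, so its negative contribution dominates the rightward motion of the lighter incoming particles; the lowering case is symmetric, with the incoming particle $(\bar u,v)$ of maximum weight playing the analogous role. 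Hence $\Phi$ drops by at least $\rho_v/2$ in either case.

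Since $\alpha(t)$ is bounded, so is $\Phi(t)$; monotonicity then implies $\Phi(t)$ converges and $\sum_t\rho_{v(t)}\le 2(\Phi(0)-\lim_t\Phi(t))<\infty$. But a balancing at $v(t)$ shifts each entry of $\alpha$ by at most $\rho_{v(t)}/2$, so $\|\alpha(t+1)-\alpha(t)\|_\infty\le\rho_{v(t)}/2$ and the total variation is finite. Therefore $\alpha(t)$ is Cauchy and converges to some $\alpha^*$. Equivalence is preserved throughout (every balancing fixes all cycle sums), so $\alpha^*\sim\alpha(0)$. To see $\alpha^*$ is balanced, fix any vertex $v$: by fairness there is a subsequence $t_k$ with $v(t_k)=v$; the summability $\sum_t \rho_{v(t)}<\infty$ forces $\rho_{v(t_k)}\to 0$, while continuity gives $\rho_{v(t_k)}=\rho_v(\alpha(t_k-1))\to\rho_v(\alpha^*)$. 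Hence $\rho_v(\alpha^*)=0$ for every $v$. Exponentiating entrywise produces the balanced matrix $B$ equivalent to $A$ with $A^{(t)}\to B$.

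The main obstacle I foresee is a self-contained proof of the $\Phi$-decrement inequality within the appendix. Unlike Lemma~\ref{lem:rbounded}, which treats only the raising case, the appendix needs the inequality for both directions, requiring either a direct repetition of the particle-motion argument in the lowering case (verifying that the unique heaviest moving particle is now the max incoming edge $(\bar u,v)$, moving leftward, with the lighter outgoing particles moving rightward below it), or explicit appeal to the transpose symmetry $\Phi^{G^T}(\alpha^T)=\Phi^G(\alpha)$ together with the observation that balancing at $v$ in $G$ and in $G^T$ coincide as operations on $\alpha$, with raising and lowering interchanged. A minor technical point is that self-loops, although excluded from $\Phi$, can serve as the maximum in- or out-edge at $v$; however, a self-loop at $v$ is fixed by any balancing at $v$ (the $+c$ and $-c$ contributions cancel), so its presence does not affect the decrement estimate.
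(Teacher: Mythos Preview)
Your proof is correct but takes a genuinely different route from the paper's appendix. The paper argues purely order-theoretically: it introduces a lexicographic-type partial order on graph functions (comparing them at the largest weight~$w$ where the level-sets $G_\alpha^w$ differ), shows that every nontrivial balancing operation is strictly decreasing in this order, and combines this with the uniform lower bound of Lemma~\ref{lem:bdedbelow} to conclude that the sequence has a limit. No potential function appears in the appendix, and no summability statement is derived.

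You instead import the factorial-weighted potential $\Phi$ from Lemma~\ref{lem:rbounded} in the main body, extend its per-step decrement inequality from raising operations to arbitrary balancing operations (either by the transpose symmetry $\Phi^{G^T}(\alpha^T)=\Phi^G(\alpha)$ you note, or by the direct particle-motion argument with $(\bar u,v)$ playing the role of the heaviest moving particle), and use summability $\sum_t\rho_{v(t)}<\infty$ to obtain a Cauchy sequence in sup norm. This buys you an explicit, elementary convergence argument with a quantitative byproduct, and your deduction that the limit is balanced---fairness plus continuity of $\rho_v$---is entirely transparent. The paper's route is shorter on the page once one accepts that a bounded strictly-decreasing sequence in its partial order converges, but that step itself needs unpacking; your approach trades that for the $\Phi$-decrement calculation, which is longer but self-contained. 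Your identification of the self-loop issue is also correct and correctly dismissed: a self-loop at $v$ can never be the strict max in- or out-edge when $\rho_v>0$, and being fixed by the operation it contributes nothing to $\partial\Phi/\partial t'$.
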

{\it Proof.}
We work in
the framework of Section~\ref{sec:prelim}, representing~$A$ as a graph
$G_\alpha$ with an associated graph function~$\alpha$, where
$\alpha_{uv}=\log |a_{uv}|$ (and edges $(u,v)$ with $a_{uv}=0$ are omitted).
We let $\alphavin$ and $\alphavout$ denote the maximum incoming
and outgoing edge weights at vertex~$v$ in $G_\alpha$; we also set
$\alphavmax=\max\{\alphavin,\alphavout\}$.

The key ingredient in the proof is the following partial order on graph functions
defined on a common strongly-connected graph~$G$.  For two such graph functions
$\alpha, \gamma$, we say that $\alpha<\gamma$ if, for the largest weight~$w$
such that $G_\alpha^w \neq G_\gamma^w$, we have $G_\alpha^w \subset G_\gamma^w$.

Now, if a decreasing sequence~$\alpha^{(t)}$ of graph functions
in this order has the property that all values $\alpha^{(t)}_{uv}$ in the functions
are bounded below by a common value~$b$, it is clear that the sequence has a limit.
The convergence of the iterative balancing
algorithm will therefore follow from the next two lemmas.  Call a balancing operation
at~$i$ {\it non-trivial\/} if $\alphavin\ne\alphavout$, so that the operation actually changes
the weights.  By definition $\alpha$ is balanced iff no non-trivial balancing operation is available.

\begin{lemma} Non-trivial balancing operations are strictly decreasing in the above order on
graph functions.
\end{lemma}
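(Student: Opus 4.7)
The plan is to show directly that any non-trivial balancing operation at a vertex $v$ strictly decreases the graph function in the stated order, by identifying the threshold $w$ at which the threshold graphs first differ and checking that at this threshold the graph strictly shrinks. Let $\alpha'$ denote the new graph function. I will assume without loss of generality that $\alphavout > \alphavin$, so $\alphavmax = \alphavout$; the opposite case is symmetric. The operation decreases every non-self-loop outgoing edge weight at $v$ by $(\alphavout - \alphavin)/2 > 0$, increases every non-self-loop incoming edge weight at $v$ by the same amount, and leaves the self-loop at $v$ (if any) together with all edges not incident to $v$ unchanged.

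The key calculation is that every edge incident to $v$ in $\alpha'$ has weight strictly less than $\alphavmax$. The maximum non-self-loop outgoing and incoming weights at $v$ both become $(\alphavout + \alphavin)/2 < \alphavout = \alphavmax$. For the self-loop, $\alpha_{vv}$ is simultaneously an incoming and an outgoing weight at $v$, hence $\alpha_{vv} \le \alphavin < \alphavmax$, and the operation does not touch it. In particular, $\alphavmax$ is attained in $\alpha$ by some non-self-loop edge at $v$ (the one realizing $\alphavout$), but by no edge incident to $v$ in $\alpha'$.

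Now compare $G_\alpha^w$ with $G_{\alpha'}^w$. For $w > \alphavmax$, no edge incident to $v$ in either graph function reaches weight $w$, and edges away from $v$ are unchanged, so $G_\alpha^w = G_{\alpha'}^w$. At $w = \alphavmax$, the non-self-loop edge at $v$ of weight $\alphavout$ belongs to $G_\alpha^{\alphavmax}$ but not to $G_{\alpha'}^{\alphavmax}$, while every edge of $G_{\alpha'}^{\alphavmax}$ is still an edge of $G_\alpha^{\alphavmax}$; removing edges can only isolate more vertices, so the same inclusion holds at the level of vertices. Hence $G_{\alpha'}^{\alphavmax}$ is a proper subgraph of $G_\alpha^{\alphavmax}$, and combined with the equality for larger $w$ this is exactly the statement $\alpha' < \alpha$. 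The only minor subtlety — really a short non-degeneracy check rather than a real obstacle — is ruling out a self-loop at $v$ that realizes $\alphavmax$, and this is taken care of by the observation $\alpha_{vv} \le \alphavin < \alphavmax$.
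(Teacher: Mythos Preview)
Your proof is correct and follows essentially the same approach as the paper: identify the critical threshold $w=\alphavmax$, check that the threshold graphs agree for all larger $w$, and verify that at $w=\alphavmax$ the new threshold graph is a proper subgraph of the old one. Your treatment is in fact slightly more careful than the paper's, since you explicitly handle the self-loop (noting $\alpha_{vv}\le\alphavin<\alphavmax$) and confirm that the vertex-removal convention for $G^w$ is compatible with the inclusion; one tiny imprecision is that the maximum non-self-loop incoming weight in $\alpha'$ need only be at most $(\alphavout+\alphavin)/2$ (equality can fail if the self-loop is the unique incoming maximizer in $\alpha$), but this does not affect the argument.
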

\begin{proof}
Let $\alpha$ be the current graph function, and let~$v$ be a vertex at which
a non-trivial balancing operation is performed.  Let $\alphabar$ denote the new
graph function after the operation.
We have $\alphabarvmax=\alphavmax-|\alphabarvin-\alphabarvout|/2$, and
$\alphabarvmax<\alphavmax$.
The balancing operation affects only edges incident at~$v$. Before the balancing, at least one of those edges belongs to $G_\alpha^{\alphavmax}$, while none belong to $G_\alpha^w$ for any $w>\alphavmax$. After the balancing, none of those edges belong to
$G_\alpha^{\alphavmax}$.
Thus the largest value of~$w$ for which $G_\alpha^w$
and $G_{\alphabar}^{w}$ differ is $w=\alphavmax$, and
$G_{\alphabar}^{\alphavmax}\subset G_\alpha^{\alphavmax}$.
\end{proof}

\begin{lemma}\label{lem:bdedbelow}
Given any graph function $\alpha$, there is a
$b>-\infty$ such that no sequence of balancing operations can
create an edge of weight less than~$b$.
\end{lemma}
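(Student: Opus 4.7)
The plan is to exploit two simple facts: (a) the maximum edge weight is non-increasing under balancing operations, and (b) the sum of edge weights around any cycle of $G_\alpha$ is an invariant of the algorithm (this was noted in Section~\ref{sec:prelim}, since balancing operations are shifts, and shifts preserve cycle sums). Together, these pin any individual edge weight between a fixed upper bound and a lower bound determined by cycle sums.

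First I would verify (a): a balancing operation at $v$ replaces both $\alphavin$ and $\alphavout$ by their arithmetic mean $(\alphavin+\alphavout)/2 \le \alphavmax$, and leaves all edges not incident to $v$ unchanged. Hence the global maximum
\[
M := \max_{(u,v) \in G_\alpha} \alpha_{uv}(0)
\]
is an upper bound on every edge weight at every time $t\ge 0$. This bound is finite since $G_\alpha$ is finite and we only consider edges that actually exist in $G_\alpha$ (balancing operations never create new edges or change $G_\alpha$).

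Next I would fix, for each edge $(u,v) \in G_\alpha$, some cycle $C(u,v)$ in $G_\alpha$ passing through $(u,v)$; such a cycle exists because $G_\alpha$ is strongly connected, and can be taken of length at most $n$. Let $S_{C(u,v)} := \sum_{(u',v') \in C(u,v)} \alpha_{u'v'}(0)$ be its initial cycle sum. Since cycle sums are preserved by every balancing operation, at any time $t$ and for any edge $(u,v)$ we have
\[
\alpha_{uv}(t) \;=\; S_{C(u,v)} \;-\!\!\!\!\sum_{(u',v') \in C(u,v),\ (u',v')\neq (u,v)}\!\!\!\! \alpha_{u'v'}(t) \;\ge\; S_{C(u,v)} - (n-1)M,
\]
using (a) to bound each of the at most $n-1$ remaining edges by $M$.

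Finally I would set
\[
b \;:=\; \min_{(u,v) \in G_\alpha}\bigl[S_{C(u,v)} - (n-1)M\bigr],
\]
which is finite (a minimum over finitely many real numbers) and serves as the required uniform lower bound on all edge weights throughout the execution. There is no serious obstacle here; the only subtlety is to recognize that strong connectivity lets us pin down each edge via a cycle and that upper-boundedness of edge weights then forces a lower bound through the invariant. \qed
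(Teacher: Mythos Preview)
Your proof is correct and follows essentially the same approach as the paper: both use that the maximum edge weight is non-increasing under balancing, together with the invariance of cycle sums and strong connectivity, to lower-bound each edge via a cycle containing it. The paper phrases the bound in terms of the minimum average cycle weight rather than fixing a cycle per edge, but this is a cosmetic difference.
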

\begin{proof} A balancing operation at vertex~$v$
reduces $\alphavmax$, so no such operation can increase
the maximum edge value above its original value in~$G_\alpha$,
say~$m_1$. Let~$w_1$ be the least average weight of any
cycle in~$G_\alpha$; note that $w_1$ remains invariant under any sequence
of balancing operations.  Consider the value of any edge $(u,v)$ under any graph
function $\alpha'$ that is reached by balancing operations from~$\alpha$.
Let $(u,v)$ belong to some (simple) cycle of average weight~$w_2$ and length~$\ell$.
Then $(\alpha'_{uv}+(\ell-1)m_1)/\ell \geq w_2 \geq w_1$, so
$\alpha'_{uv} \geq w_1-(\ell-1)(m_1-w_1) \geq w_1-(n-1)(m_1-w_1)$.
This completes the proof of the lemma and the theorem.
\end{proof}

\end{document}